\newcommand{\encode}{\mathsf{enc}}
\title{Zero-Knowledge Password Policy Check from Lattices}
\author{Khoa Nguyen, Benjamin Hong Meng Tan, Huaxiong Wang}
\institute{
Division of Mathematical Sciences, \\
School of Physical and Mathematical Sciences,\\
Nanyang Technological University, Singapore.\\
\texttt{\{khoantt,tanh0199,hxwang\}@ntu.edu.sg}
}
\begin{document}

\maketitle

\begin{abstract}
Passwords are ubiquitous and most commonly used to authenticate users when logging into online services.
Using high entropy passwords is critical to prevent unauthorized access and password policies emerged to enforce this requirement on passwords.
However, with current methods of password storage, poor practices and server breaches have leaked many passwords to the public.
To protect one's sensitive information in case of such events, passwords should be hidden from servers.
Verifier-based password authenticated key exchange, proposed by Bellovin and Merrit (IEEE S\&P, 1992), allows authenticated secure channels to be established with a hash of a password (verifier).
Unfortunately, this restricts password policies as passwords cannot be checked from their verifier.
To address this issue, Kiefer and Manulis (ESORICS 2014) proposed zero-knowledge password policy check (ZKPPC).
A ZKPPC protocol allows users to prove in zero knowledge that a hash of the user's password satisfies the password policy required by the server.
Unfortunately, their proposal is not quantum resistant with the use of discrete logarithm-based cryptographic tools and there are currently no other viable alternatives.
In this work, we construct the first post-quantum ZKPPC using lattice-based tools.
To this end, we introduce a new randomised password hashing scheme for ASCII-based passwords and design an accompanying zero-knowledge protocol for policy compliance.
Interestingly, our proposal does not follow the framework established by Kiefer and Manulis and offers an alternate construction without homomorphic commitments.
Although our protocol is not ready to be used in practice, we think it is an important first step towards a quantum-resistant privacy-preserving password-based authentication and key exchange system.
\end{abstract}

\section{Introduction}\label{section:intro}
One of the most common methods of user authentication is passwords when logging in to online services.
So, it is very important that passwords in use have sufficient entropy and hard to guess for security.
Password policies was introduced to guide users into choosing suitable passwords that are harder to guess.
Ur et al.~\cite{UKK+12} discovered that users are more likely to choose easily guessable passwords in the absence of a password policy.
Examining the password policies of over 70 web-sites, Flor\^encio and Herley~\cite{FH10} found that most require passwords with characters from at least one of four sets, digits, symbols, lowercase and uppercase letters and a minimum password length.
Hence, it is reasonable to focus on password policies with a minimum password length, sets of valid characters and maybe constraints on the diversity of characters used.

Even with strong passwords and good policies, nothing can prevent leaks if servers do not properly store passwords.
Improperly stored passwords can cause serious problems, as seen by hacks on LinkedIn~\cite{LI12} and Yahoo~\cite{Ya16} and the web-site ``Have I Been Pwned?"~\cite{pwned}.
Sadly, such poor practices are not uncommon: many popular web-sites were discovered by Baumann et al.~\cite{BLL15} to store password information in plaintext.

If servers cannot be trusted, then no password information should be stored there at all.
Thus, protocols that do not require storing secret user information at external servers become necessary.
However, even with secret passwords, password policies are important to enforce a base level of security against dictionary attacks, leaving a dilemma: how do users prove compliance of their password without revealing anything?



Kiefer and Manulis~\cite{KM14} showed how to address this problem with zero knowledge password policy check~(ZKPPC).
It enables blind registration: users register a password with a server and prove password policy conformance without revealing anything about their passwords, thereby solving the dilemma.
With ZKPPC, some randomised password verification information is stored at the server and it does not leak information about the password, protecting against server compromises.
Furthermore, ZKPPC allows a user to prove, without revealing any information, that the password conforms to the server's policy.
Blind registration can be coupled with a verifier-based password-based authenticated key exchange~(VPAKE) protocol to achieve a complete system for privacy-preserving password-based registration, authentication and key exchange.
Password-based authenticated key exchange~(PAKE)~\cite{BM92,BPR00,GL03,KV09,DF11,BBC+13} is a protocol that allows users to simultaneously authenticate themselves using passwords and perform key exchange.
However, these protocols store passwords on the server and thus, users have to trust the security of the server's password storage and may be vulnerable to password leakages in the event of server compromise.
Verifier-based PAKE~\cite{BM93,BPR00,GMR06,BP13} extends PAKE to limit the damage caused by information leakage by storing a verifier instead.
Verifiers are a means to check that users supplied the correct passwords and are usually hash values of passwords with a salt, which makes it hard to extract the passwords from verifiers.

A ZKPPC protocol allows users to prove that their password, committed in the verifier, satisfies some password policy.
VPAKE can then be used to securely authenticate and establish keys whenever communication is required.
Together, the password is never revealed, greatly increasing the user security over current standards. Passwords are harder to guess and no longer easily compromised by server breaches.

Kiefer and Manulis~\cite{KM14} proposed a generic construction of ZKPPC using homomorphic commitments and set membership proofs.
In the same work, a concrete ZKPPC protocol was constructed using Pedersen commitments~\cite{Ped91}, whose security is based on the hardness of the discrete logarithm problem.
As such, it is vulnerable to attacks from quantum adversaries due to Shor's algorithm~\cite{Sho99} which solves the discrete logarithm problem in quantum polynomial time.
With NIST issuing a call for proposals to standardize quantum resistant cryptography~\cite{NIST16}, it is clear that we need to prepare cryptographic schemes and protocols that are quantum resistant, in case a sufficiently powerful quantum computer is realized.
As there is currently no proposal of ZKPPC protocol that has the potential to be quantum resistant, it is an interesting open problem to construct one.


\medskip\noindent
\textsc{Our contributions and techniques. }
In this work, inspired by the attractiveness of ZKPPC protocols and the emergence of lattice-based cryptography as a strong quantum resistant candidate, we aim to construct a ZKPPC protocol from lattices. Our contribution is two-fold. We first design a randomised password hashing scheme
based on the hardness of the Short Integer Solution (\textsf{SIS}) problem. We then construct a  \textsf{SIS}-based statistical zero-knowledge argument of knowledge, which allows the client to  convince the server that his secret password, committed in a given hash value, satisfies the server's policy. This yields the first ZKPPC protocol that still resists against quantum computers.

Our first technical challenge is to derive a password encoding mechanism that operates securely and interacts smoothly with available lattice-based cryptographic tools. In the discrete log setting considered in~\cite{KM14}, passwords are mapped to large integers and then encoded as elements in a group of large order. Unfortunately, this does not translate well to the lattice setting as working with large-norm objects usually makes the construction less secure and less efficient. Therefore, a different method, which encodes passwords as small-norm objects, is desirable. To this end,
we define a password encoding mechanism, that maps a password consisting of $t$ characters to a binary vector of length $8t$, where each of the $t$ blocks is the $8$-bit representation of the ASCII value of the corresponding password character. To increase its entropy, we further shuffle the arrangement of those blocks using a random permutation, and then commit to the permuted vector as well as a binary encoding of the permutation via the \textsf{SIS}-based commitment scheme proposed by Kawachi, Tanaka and Xagawa~\cite{AC:KawTanXag08}.  This commitment value is then viewed as the randomised hash value of the password.

The next technical challenge is to prove in zero-knowledge that the committed password satisfies a policy of the form $f = \big((k_D, k_U, k_L, k_S), n_{\mathtt{min}}, n_{\mathtt{max}}\big)$, which demands that the password have length at least $n_{\sf min}$ and at most $n_{\sf max}$, and contain at least $k_D$ digits, $k_S$ symbols, $k_L$ lower-case and $k_U$ upper-case letters.
To this end, we will have to prove, for instance, that a committed length-$8$ block-vector belongs to the set of vectors encoding all $10$ digits. We thus need a lattice-based sub-protocol for proving set membership. In the lattice-based world, a set membership argument system with logarithmic complexity in the cardinality of the set was proposed in~\cite{EC:LLNW16}, exploiting Stern-like protocols~\cite{Stern96} and Merkle hash trees. However, the asymptotic efficiency does not come to the front  when the underlying set has small, constant size. Here, we employ a different approach, which has linear complexity but is technically simpler and practically more efficient, based on the extend-then-permute technique for Stern's protocol, suggested by Ling et al.~\cite{PKC:LNSW13}. Finally, we use a general framework for Stern-like protocols, put forward by Libert et al.~\cite{AC:LLMNW16b}, to combine all of our sub-protocols for set membership and obtain a ZKPPC protocol.

From a practical point of view, our lattice-based ZKPPC protocol is not yet ready to be used: for a typical setting of parameters, an execution with soundness error $2^{-30}$ has communication cost around $900$ KB. We, however, believe that there are much room for improvement and view this work as the first step in designing post-quantum privacy-preserving password-based authentication and key exchange systems.

\medskip\noindent
\textsc{Related work. }
The only construction of ZKPPC was proposed by Kiefer and Manulis~\cite{KM14} using Pedersen commitments~\cite{Ped91} and a randomised password hashing scheme introduced in the same work.
It commits each character individually and uses set membership proofs to prove compliance of the entire password to a password policy.
The password hash is the sum of the committed characters and thus is linked to the set membership proofs through the homomorphic property of the commitments used.
As mentioned previously, their protocol is vulnerable to quantum adversaries and greater diversity is desirable.

To improve the efficiency of secure password registration for VPAKE~\cite{KM16a} and two server PAKE~\cite{KM16b}, Kiefer and Manulis proposed blind password registration~(BPR), a new class of cryptographic protocols that prevent password leakage from the server.
Using techinques introduced in~\cite{KM14}, Kiefer and Manulis used an efficient shuffling proof from~\cite{Furu05} to achieve $\mathcal{O}(1)$ number of set membership proofs instead of $\mathcal{O}(n_{max})$ in ZKPPC.
However, the security model considered for BPR is only suitable for honest but curious participants.
The security of ZKPPC is defined to prevent malicious users from registering bad passwords that do not conform to the given password policy.
Malicious servers also do not gain any information on the registered password from running the ZKPPC protocol.
Overall, the security model of BPR is weaker than the capabilities of ZKPPC and available instantiations are not resistant to quantum adversaries.

An alternate approach using symmetric key primitives, secure set-based policy checking~(SPC), to check password policy was proposed in~\cite{DK15}.
Policies are represented set-theoretically as monotone access structures and are mapped to linear secret sharing schemes~(LSSS).
Then, checking policy compliance corresponds to verifying whether some set is in the access structure, i.e. if the set of shares can reconstruct the secret in the LSSS.
To obtain a privacy-preserving protocol for SPC, the oblivious bloom intersection~(OBI) from~\cite{DCW13} is used.
The server constructs an LSSS that only users who fulfil the policy can obtain the right shares from the OBI and recover the secret.
Knowledge of the secret is proved with a hash of the secret with the transcript of the protocol execution and identities of the two parties, tying the protocol to the proof of knowledge.
In the proposed SPC protocol, the one-more-RSA assumption is used to guarantee that the password registration protocol is sound when used by a malicious client.
Thus, in the presence of a quantum adversary, the SPC protocol cannot be considered sound anymore.
Since the focus is on quantum resistant blind registration of passwords with malicious participants, the SPC protocol is insufficient.


\medskip
\noindent
\textsc{Zero-knowledge proofs in lattice-based cryptography.}
Early work on interactive and non-interactive proof systems~\cite{STOC:GolGol98,C:MicVad03,C:PeiVai08} for lattices exploited the geometric structure of worst-case lattice problems, and are not generally applicable in lattice-based cryptography.
More recent methods of proving relations appearing in lattice-based cryptosystems belong to the following two main families.

The first family, introduced by Lyubashevsky~\cite{PKC:Lyubashevsky08,EC:Lyubashevsky12}, uses ``rejection sampling'' techniques, and lead to relatively efficient proofs of knowledge of small secret vectors~\cite{AC:BCKLN14,C:BDLN16,CDXY17,dPL17}, and proofs of linear and multiplicative relations among committed values~\cite{ESORICS:BKLP15,EPRINT:BDOP16} in the ideal lattice setting.
However, due to the nature of ``rejection sampling'', there is a tiny probability that even an honest prover may fail to convince the verifier: i.e., protocols in this family do not have perfect completeness.
Furthermore, when proving knowledge of vectors with norm bound~$\beta$, the knowledge extractor of these protocols is only guaranteed to produce witnesses of norm bound~$g\cdot \beta$, for some factor $g>1$.
This factor, called ``soundness slack'' in~\cite{C:BDLN16,CDXY17}, may be undesirable: if an extracted witness has to be used in the security proof to solve a challenge \textsf{SIS} instance, we need the $\mathsf{SIS}_{g\cdot \beta}$ assumption, which is stronger than the~$\mathsf{SIS}_{\beta}$ assumption required by the protocol itself. 
Moreover, in some sophisticated cryptographic constructions such as the zero-knowledge password policy check protocol considered in this work, the coordinates of extracted vectors are expected to be in~$\{0,1\}$ and/or satisfy a specific pattern.
Such issues seem hard to tackle using this family of protocols.  

The second family, initiated by Ling {\it et al.}~\cite{PKC:LNSW13}, use ``decomposition-extension'' techniques in lattice-based analogues~\cite{AC:KawTanXag08} of Stern's protocol~\cite{Stern96}.
These are less efficient than those of the first family because each protocol execution admits a constant soundness error, and require repeating protocols $\omega(\log n)$ times, for a security parameter $n$, to achieve negligible soundness error.
On the upside, Stern-like protocols have perfect completeness and can handle a wide range of lattice-based relations~\cite{PKC:LinNguWan15,EC:LLNW16,CNW16,AC:LLMNW16b,AC:LLMNW16a,LNWX17}, especially when witnesses have to not only be small or binary, but also certain prescribed arrangement of coordinates.
Furthermore, unlike protocols of the first family, the extractor of Stern-like protocols can output witness vectors with the same properties expected of valid witnesses.
This feature is often crucial in the design of advanced protocols involving \textsf{ZK} proofs.
In addition, the  ``soundness slack'' issue is completely avoided, so the hardness assumptions are kept ``in place''.

\medskip\noindent
\textsc{Organization.}
In the next section, we define notations used in the paper and briefly describe the building blocks for our ZKPPC protocol.
Following that, in Section~\ref{section:construction}, we instantiate the building blocks and ZKPPC protocol with lattices-based primitives.
Finally, we summarize and conclude in Section~\ref{section:conclusion}.

\section{Preliminaries}\label{section:prelim}
    \vspace{-0.3cm}
    {\sc Notation.}
    We assume all vectors are column vectors.
    A vector $\mathbf{x}$ with coordinates $x_1, \ldots, x_m$ is written as $\mathbf{x} = (x_1, \ldots, x_m)$.
    For simplicity, concatenation of $\mathbf{x} \in \mathbb{R}^k$ and $\mathbf{y} \in \mathbb{R}^m$ is denoted with $(\mathbf{x} \| \mathbf{y}) \in \mathbb{R}^{k + m}$.
    Column-wise concatenation of matrices $\mathbf{A} \in \mathbb{R}^{n \times k}$ and $\mathbf{B} \in \mathbb{R}^{n \times m}$ is denoted by $[\mathbf{A} \hspace*{1.6pt}|\hspace*{1.6pt} \mathbf{B}] \in \mathbb{R}^{n \times (k + m)}$.
    If $S$ is a finite set, then $x \xleftarrow{\$} S$ means that $x$ is chosen uniformly at random over $S$.
    For a positive integer $n$, $[n]$ denotes the set $\{1,\ldots, n\}$ and $\mathsf{negl}(n)$ denotes a negligible function in~$n$.
    The set of all permutations of $n$ elements is denoted by $\mathcal{S}_n$.
    All logarithms are of base $2$.
    \vspace{-0.3cm}
    \subsection{Some Lattice-Based Cryptographic Ingredients}\label{subsection:lattice-tools}
    \vspace{-0.1cm}
    We first recall the average-case problem \textsf{SIS} and its link to worst-case lattice problems.
    \begin{definition}[$\mathsf{SIS}^{\infty}_{n,m,q,\beta}$ \cite{STOC:Ajtai96,STOC:GenPeiVai08}]
    Given a uniformly random matrix $\mathbf{A} \in \mathbb{Z}_q^{n \times m}$, find a non-zero vector $\mathbf{x} \in \mathbb{Z}^m$ such that~$\|\mathbf{x}\|_\infty \leq \beta$ and $\mathbf{A\cdot x=0} \bmod q.$
    \end{definition}
    The hardness of the \textsf{SIS} is guaranteed by the worst-case to average-case reduction from lattice problems.
    If $m, \beta = \mathsf{poly}(n)$, and $q > \beta\cdot\widetilde{\mathcal{O}}(\sqrt{n})$, then the $\mathsf{SIS}^{\infty}_{n,m,q,\beta}$ problem  is at least as hard as the worst-case lattice problem $\mathsf{SIVP}_\gamma$ for some $\gamma = \beta \cdot \widetilde{\mathcal{O}}(\sqrt{nm})$ (see, e.g.,~\cite{STOC:GenPeiVai08,C:MicPei13}).

    \smallskip
    \noindent
    {\bf The KTX commitment scheme.} In this work, we employ the \textsf{SIS}-based commitment scheme proposed by Kawachi, Tanaka and Xagawa~\cite{AC:KawTanXag08} (KTX).
    The scheme, with two flavours, works with lattice parameter $n$, prime modulus $q = \widetilde{\mathcal{O}}(n)$, and dimension $m = 2n\lceil\log q\rceil$. 

    In the variant that commits $t$ bits, for some fixed $t = \mathsf{poly}(n)$, the commitment~key is $(\mathbf{A}, \mathbf{B}) \xleftarrow{\$} \mathbb{Z}_q^{n \times t} \times \mathbb{Z}_q^{n \times m}$.
    To commit $\mathbf{x} \in \{0,1\}^t$, one samples randomness $\mathbf{r} \xleftarrow{\$} \{0,1\}^m$, and outputs the commitment $\mathbf{c} = \mathbf{A}\cdot \mathbf{x} + \mathbf{B}\cdot \mathbf{r} \bmod q$.
    Then, to open $\mathbf{c}$, one reveals $\mathbf{x} \in \{0,1\}^t$ and $\mathbf{r} \in \{0,1\}^m$.

    If there exists two valid openings $(\mathbf{x}_1, \mathbf{r}_1)$ and $(\mathbf{x}_2, \mathbf{r}_2)$ for the same commitment $\mathbf{c}$ and $\mathbf{x}_1 \neq \mathbf{x}_2$, then one can compute a solution to the $\mathsf{SIS}_{n,m+t, q,1}^\infty$ problem associated with the uniformly random matrix $[\mathbf{A} \mid \mathbf{B}] \in \mathbb{Z}_q^{n \times (m+t)}$.
    On the other hand, by the left-over hash lemma~\cite{STOC:Regev05}, the distribution of a valid commitment $\mathbf{c}$ is statistically close to uniform over~$\mathbb{Z}_q^n$ which implies that it is statistically hiding.

    Kawachi et al.~\cite{AC:KawTanXag08} extended the above $t$-bit commitment scheme to a string commitment scheme $\mathsf{COM}: \{0,1\}^* \times \{0,1\}^m \rightarrow \mathbb{Z}_q^n$.
    The extended scheme shares the same characteristics, statistically hiding from the parameters set and computationally binding under the $\textsf{SIS}$ assumption. 

    In this work, we use the former variant to commit to passwords, and use 
    \textsf{COM} as a building block for Stern-like zero-knowledge protocols.

    \vspace{-0.3cm}
    \subsection{Zero-Knowledge Argument Systems and Stern-like Protocols}\label{subsection:prelim:Stern}
    \vspace{-0.1cm}
    We work with statistical zero-knowledge argument systems, interactive protocols where the zero-knowledge property holds against \emph{any} cheating verifier and the soundness property holds against \emph{computationally bounded} cheating provers.
    More formally, let the set of statements-witnesses $\mathrm{R} = \{(y,w)\} \in \{0,1\}^* \times \{0,1\}^*$ be an \textsf{NP} relation.
    A two-party game $\langle \mathcal{P},\mathcal{V} \rangle$ is called an interactive argument system for the relation $\mathrm{R}$ with soundness error $e$ if two conditions hold: \vspace{-0.1cm}
    \begin{itemize}\itemsep=0.1cm
        \item {\sf Completeness.} If $(y,w) \in \mathrm{R}$ then $\mathrm{Pr}\big[\langle \mathcal{P}(y,w),\mathcal{V}(y) \rangle =1\big]=1.$
        \item {\sf Soundness.} If  $(y,w) \not \in \mathrm{R}$, then $\forall$ \textsf{PPT} $\widehat{\mathcal{P}}$: \hspace*{2.5pt}$\mathrm{Pr}[\langle \widehat{\mathcal{P}}(y,w),\mathcal{V}(y) \rangle =1] \leq e.$
    \end{itemize}\vspace{-0.1cm}
    Here and henceforth, \textsf{PPT} denotes probabilistic polynomial time.
    An argument system is statistical zero-knowledge if for any~$\widehat{\mathcal{V}}(y)$, there exists a \textsf{PPT} simulator $\mathcal{S}(y)$ which produces a simulated transcript that is statistically close to that of the real interaction between $\mathcal{P}(y,w)$ and $\widehat{\mathcal{V}}(y)$.
    A related notion is argument of knowledge, which requires the witness-extended emulation property.
    For $3$ move protocols ({\emph{i.e.}}, commitment-challenge-response), witness-extended emulation is implied by \emph{special soundness}~\cite{ACNS:Groth04}, which assumes the existence of a \textsf{PPT} extractor, taking as input a set of valid transcripts with respect to all possible values of the ``challenge'' to the same ``commitment'', and returning $w'$ such that $(y,w') \in \mathrm{R}$.

    \smallskip\noindent
    {\bf Stern-like protocols.}
    The statistical zero-knowledge arguments of knowledge presented in this work are Stern-like~\cite{Stern96} protocols.
    In particular, they are $\Sigma$-protocols
    as defined in~\cite{AC:JKPT12,AC:BCKLN14}, where~$3$ valid transcripts are needed for extraction instead of just~$2$.
    Stern's protocol was originally proposed for code-based cryptography, and adapted to lattices by Kawachi et al.~\cite{AC:KawTanXag08}.
    It was subsequently empowered by Ling et al.~\cite{PKC:LNSW13} to handle the matrix-vector relations associated with the \textsf{SIS} and inhomogeneous \textsf{SIS} problems and extended to design several lattice-based schemes: group signatures~\cite{PKC:LinNguWan15,EC:LLNW16,AC:LLMNW16b,LNWX17}, policy-based signatures~\cite{CNW16} and group encryption~\cite{AC:LLMNW16a}.

    The basic protocol has $3$ moves.
    With $\mathsf{COM}$, the
    KTX string commitment scheme~\cite{AC:KawTanXag08}, we get a statistical zero-knowledge argument of knowledge (\textsf{ZKAoK}) with perfect completeness, constant soundness error $2/3$, and communication cost $\mathcal{O}(|w|\cdot \log q)$, where $|w|$ is the total bit-size of the secret vectors.
    \medskip

    \noindent
    {\bf An abstraction of Stern's protocol.}
    We recall an abstraction of Stern's protocol, proposed in~\cite{AC:LLMNW16b}.
    Let $n, \ell, q$ be positive integers, where $\ell\geq n$, $q \geq 2$, and $\mathsf{VALID}$ be a subset of $\{0,1\}^\ell$. Suppose $\mathcal{S}$ is a finite set and every $\phi \in \mathcal{S}$ is associated with a permutation $\Gamma_\phi$ of $\ell$ elements, satisfying the following conditions:
    \vspace{-0.3cm}
    \begin{eqnarray}\label{eq:zk-equivalence}
    \begin{cases}
    \mathbf{w} \in \mathsf{VALID} \hspace*{2.5pt} \Leftrightarrow \hspace*{2.5pt} \Gamma_\phi(\mathbf{w}) \in \mathsf{VALID}, \\
    \text{If } \mathbf{w} \in \mathsf{VALID} \text{ and } \phi \text{ is uniform in } \mathcal{S}, \text{ then }  \Gamma_\phi(\mathbf{w}) \text{ is uniform in } \mathsf{VALID}.
    \end{cases}
    \end{eqnarray}
    We aim to construct a statistical \textsf{ZKAoK} for the following abstract relation: \vspace{-0.2cm}
    \begin{eqnarray*}
    \mathrm{R_{abstract}} = \big\{(\mathbf{M}, \mathbf{v}), \mathbf{w} \in \mathbb{Z}_q^{n \times \ell} \times \mathbb{Z}_q^n \times \mathsf{VALID}: \mathbf{M}\cdot \mathbf{w} = \mathbf{v} \bmod q.\big\}\vspace{-1cm}
    \end{eqnarray*}

    Stern's original protocol has $\mathsf{VALID} = \{\mathbf{w} \in \{0,1\}^\ell: \mathsf{wt}(\mathbf{w}) = k\}$, where $\mathsf{wt}(\cdot)$ denotes the Hamming weight and $k < \ell$ for some given $k$, $\mathcal{S} = \mathcal{S}_\ell$ -- the symmetric group on~$\ell$ elements, and $\Gamma_{\phi}(\mathbf{w}) = \phi(\mathbf{w})$.

    The conditions in (\ref{eq:zk-equivalence}) are key to prove that $\mathbf{w} \in \mathsf{VALID}$ in \textsf{ZK}:
    The prover $\mathcal{P}$ samples $\phi \xleftarrow{\$} \mathcal{S}$ and the verifier $\mathcal{V}$ checks that $\Gamma_\phi(\mathbf{w}) \in \mathsf{VALID}$; no additional information about $\mathbf{w}$ is revealed to $\mathcal{V}$ due to the randomness of $\phi$.
    Furthermore, to prove in \textsf{ZK} that $\mathbf{M}\cdot \mathbf{w} = \mathbf{v} \bmod q$ holds, $\mathcal{P}$ samples $\mathbf{r}_w \xleftarrow{\$} \mathbb{Z}_q^\ell$ to mask $\mathbf{w}$, and convinces $\mathcal{V}$ instead that $\mathbf{M}\cdot (\mathbf{w} + \mathbf{r}_w) = \mathbf{M}\cdot \mathbf{r}_w + \mathbf{v} \bmod q.$

    We describe the interaction between $\mathcal{P}$ and $\mathcal{V}$ in Figure~\ref{Figure:Interactive-Protocol}.
    A statistically hiding and computationally binding string commitment scheme~\textsf{COM}, e.g. the scheme in Section~\ref{subsection:lattice-tools}, is used.

    \begin{figure}[!htbp]

    \begin{enumerate}
      \item \textbf{Commitment:} Prover $\mathcal{P}$ samples $\mathbf{r}_w \xleftarrow{\$} \mathbb{Z}_q^\ell$, $\phi \xleftarrow{\$} \mathcal{S}$ and randomness $\rho_1, \rho_2, \rho_3$ for $\mathsf{COM}$. \\
    Then, a commitment $\mathrm{CMT}= \big(C_1, C_2, C_3\big)$ is sent to the verifier $\mathcal{V}$, where \vspace{-0.1cm}
        \begin{gather*}
            C_1 =  \mathsf{COM}(\phi, \mathbf{M}\cdot \mathbf{r}_w \bmod q; \rho_1), \hspace*{8.6pt}
            C_2 =  \mathsf{COM}(\Gamma_{\phi}(\mathbf{r}_w); \rho_2), \hspace*{8.6pt}
            C_3 =  \mathsf{COM}(\Gamma_{\phi}(\mathbf{w} + \mathbf{r}_w \bmod q); \rho_3).
        \end{gather*}\vspace{-0.5cm}

      \item \textbf{Challenge:} $\mathcal{V}$ sends a challenge $Ch \xleftarrow{\$} \{1,2,3\}$ to $\mathcal{P}$.
      \smallskip
      \item \textbf{Response:} Based on $Ch$, $\mathcal{P}$ sends $\mathrm{RSP}$ computed as follows:
      \smallskip
    \begin{itemize}\itemsep=0.1cm
    \item $Ch = 1$: Let $\mathbf{t}_{w} = \Gamma_{\phi}(\mathbf{w})$, $\mathbf{t}_{r} = \Gamma_{\phi}(\mathbf{r}_w)$, and $\mathrm{RSP} = (\mathbf{t}_w, \mathbf{t}_r, \rho_2, \rho_3)$.
    \item $Ch = 2$: Let $\phi_2 = \phi$, $\mathbf{w}_2 = \mathbf{w} + \mathbf{r}_w \bmod q$, and
        $\mathrm{RSP} = (\phi_2, \mathbf{w}_2, \rho_1, \rho_3)$.
    \item $Ch = 3$: Let $\phi_3 = \phi$, $\mathbf{w}_3 = \mathbf{r}_w$, and
     $\mathrm{RSP} = (\phi_3, \mathbf{w}_3, \rho_1, \rho_2)$.
    \end{itemize}
    \end{enumerate}
    \textbf{Verification:}  Receiving $\mathrm{RSP}$, $\mathcal{V}$ proceeds as follows:
              \begin{itemize}\itemsep=0.15cm
                \item $Ch = 1$: Check that $\mathbf{t}_w \in \mathsf{VALID}$, $C_2 = \mathsf{COM}(\mathbf{t}_r; \rho_2)$, ${C}_3 = \mathsf{COM}(\mathbf{t}_w + \mathbf{t}_r \bmod q; \rho_3)$.
                 \item $Ch = 2$: Check that $C_1 = \mathsf{COM}(\phi_2, \mathbf{M}\cdot \mathbf{w}_2 - \mathbf{v} \bmod q; \rho_1)$, ${C}_3 = \mathsf{COM}(\Gamma_{\phi_2}(\mathbf{w}_2); \rho_3)$.
                \item $Ch = 3$: Check that $C_1 =  \mathsf{COM}(\phi_3, \mathbf{M}\cdot \mathbf{w}_3; \rho_1), \hspace*{5pt}
            C_2 =  \mathsf{COM}(\Gamma_{\phi_3}(\mathbf{w}_3); \rho_2).$

              \end{itemize}
              In each case, $\mathcal{V}$ outputs $1$ if and only if all the conditions hold.
    \medskip
    \caption{Stern-like \textsf{ZKAoK} for the relation $\mathrm{R_{abstract}}$.}\label{Figure:Interactive-Protocol}
    \end{figure}
    The properties of the protocol are summarized in Theorem~\ref{Theorem:zk-protocol}.

    \begin{theorem}[\cite{AC:LLMNW16b}]\label{Theorem:zk-protocol}
    Assuming that $\mathsf{COM}$ is a statistically hiding and computationally binding string commitment scheme, the protocol in Figure~\ref{Figure:Interactive-Protocol} is a statistical \emph{\textsf{ZKAoK}} with perfect completeness, soundness error~$2/3$, and communication cost~$\mathcal{O}(\ell\log q)$.
    In particular:
    \begin{itemize}\itemsep=0.1cm
    \item There exists a polynomial-time simulator that, on input $(\mathbf{M}, \mathbf{v})$, outputs an accepted transcript statistically close to that produced by the real prover.
    \item There exists a polynomial-time knowledge extractor that, on input a commitment $\mathrm{CMT}$ and $3$ valid responses $(\mathrm{RSP}_1,\mathrm{RSP}_2,\mathrm{RSP}_3)$ to all $3$ possible values of the challenge $Ch$, outputs $\mathbf{w}' \in \mathsf{VALID}$ such that $\mathbf{M}\cdot \mathbf{w}' = \mathbf{v} \bmod q.$
    \end{itemize}
    \end{theorem}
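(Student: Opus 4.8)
The plan is to establish the three claimed properties—perfect completeness, statistical zero-knowledge, and special-soundness-based knowledge extraction—separately, together with a straightforward accounting of the communication cost.

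For completeness, I would simply verify that an honest execution passes every check. Fix $(\mathbf{M},\mathbf{v})$ and $\mathbf{w}\in\mathsf{VALID}$ with $\mathbf{M}\mathbf{w}=\mathbf{v}\bmod q$ and walk through the three cases. For $Ch=1$, $\mathbf{t}_w=\Gamma_\phi(\mathbf{w})\in\mathsf{VALID}$ by the first line of~(\ref{eq:zk-equivalence}), and the checks on $C_2,C_3$ hold because $\Gamma_\phi(\mathbf{w}+\mathbf{r}_w\bmod q)=\Gamma_\phi(\mathbf{w})+\Gamma_\phi(\mathbf{r}_w)\bmod q=\mathbf{t}_w+\mathbf{t}_r$, using that a coordinate permutation is linear over $\mathbb{Z}_q^\ell$. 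For $Ch=2$, $\mathbf{M}\mathbf{w}_2-\mathbf{v}=\mathbf{M}(\mathbf{w}+\mathbf{r}_w)-\mathbf{v}=\mathbf{M}\mathbf{r}_w\bmod q$, matching the $C_1$ check, and the $C_3$ check is immediate. For $Ch=3$ both checks hold by the definitions of $C_1,C_2$. Hence $\mathcal{V}$ always outputs $1$.

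For zero-knowledge, I would build the standard Stern simulator: on input $(\mathbf{M},\mathbf{v})$, pick $\overline{Ch}\xleftarrow{\$}\{1,2,3\}$ as a guess of the verifier's challenge and prepare a commitment tailored to cheat in the two non-guessed cases. Concretely, for $\overline{Ch}=1$ the simulator samples $\phi,\mathbf{r}_w$ honestly but replaces $\mathbf{w}$ by an arbitrary $\mathbf{w}'\in\mathsf{VALID}$; for $\overline{Ch}=2$ it picks $\mathbf{w}_2\xleftarrow{\$}\mathbb{Z}_q^\ell$ and sets $C_1=\mathsf{COM}(\phi,\mathbf{M}\mathbf{w}_2-\mathbf{v};\rho_1)$, $C_3=\mathsf{COM}(\Gamma_\phi(\mathbf{w}_2);\rho_3)$ consistently, with $C_2$ a commitment to a random string; for $\overline{Ch}=3$ it plays essentially honestly with $\mathbf{w}_3=\mathbf{r}_w$, needing no secret. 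It then receives $Ch$ from $\widehat{\mathcal{V}}$; if $Ch\neq\overline{Ch}$ it rewinds, otherwise it outputs the prepared response, which passes verification. Since the challenge is chosen independently, each guess succeeds with probability $1/3$, so the simulator runs in expected polynomial time. The simulated transcript is statistically close to the real one: in the two ``honest-looking'' components the masking by a uniform $\mathbf{r}_w\in\mathbb{Z}_q^\ell$ (respectively by $\phi$ uniform in $\mathcal{S}$) makes the revealed data identically distributed to the real protocol by the second line of~(\ref{eq:zk-equivalence}), while the single ``fake'' commitment is indistinguishable from a real one by the statistical hiding of $\mathsf{COM}$; the rest is a routine case analysis.

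The crux is the knowledge extractor. Given $\mathrm{CMT}=(C_1,C_2,C_3)$ and accepting responses to $Ch=1,2,3$, parse $\mathrm{RSP}_1=(\mathbf{t}_w,\mathbf{t}_r,\rho_2^{(1)},\rho_3^{(1)})$, $\mathrm{RSP}_2=(\phi_2,\mathbf{w}_2,\rho_1^{(2)},\rho_3^{(2)})$, $\mathrm{RSP}_3=(\phi_3,\mathbf{w}_3,\rho_1^{(3)},\rho_2^{(3)})$. The verification equations yield two openings of $C_1$ (from $\mathrm{RSP}_2,\mathrm{RSP}_3$), two of $C_2$ (from $\mathrm{RSP}_1,\mathrm{RSP}_3$), and two of $C_3$ (from $\mathrm{RSP}_1,\mathrm{RSP}_2$); the computational binding of $\mathsf{COM}$ forces each pair to coincide, except with negligible probability, giving $\phi_2=\phi_3=:\phi$, $\Gamma_\phi(\mathbf{w}_3)=\mathbf{t}_r$, $\Gamma_\phi(\mathbf{w}_2)=\mathbf{t}_w+\mathbf{t}_r\bmod q$, $\mathbf{M}\mathbf{w}_2-\mathbf{v}=\mathbf{M}\mathbf{w}_3\bmod q$, together with $\mathbf{t}_w\in\mathsf{VALID}$ from the $Ch=1$ check (if some pair of openings differs, the extractor instead outputs a binding collision). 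Now set $\mathbf{w}':=\mathbf{w}_2-\mathbf{w}_3\bmod q$. Then $\Gamma_\phi(\mathbf{w}')=\Gamma_\phi(\mathbf{w}_2)-\Gamma_\phi(\mathbf{w}_3)=\mathbf{t}_w\in\mathsf{VALID}$, so $\mathbf{w}'\in\mathsf{VALID}$ by the equivalence in~(\ref{eq:zk-equivalence}), and $\mathbf{M}\mathbf{w}'=\mathbf{M}\mathbf{w}_2-\mathbf{M}\mathbf{w}_3=\mathbf{v}\bmod q$. Thus $\mathbf{w}'$ is a valid witness; this is exactly special soundness, which for a $3$-move protocol gives witness-extended emulation, and the soundness error $2/3$ follows since a prover that cannot answer all three challenges on a fixed commitment is rejected with probability at least $1/3$. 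Finally, the communication cost is dominated by one vector in $\mathbb{Z}_q^\ell$ (the masked or permuted witness), a permutation description, and three $\mathsf{COM}$ outputs in $\mathbb{Z}_q^n$, all $\mathcal{O}(\ell\log q)$ bits since $\ell\geq n$. I expect the main obstacle to be the bookkeeping in the extractor—invoking binding correctly on every shared commitment and using~(\ref{eq:zk-equivalence}) precisely where it upgrades $\Gamma_\phi(\mathbf{w}')\in\mathsf{VALID}$ to $\mathbf{w}'\in\mathsf{VALID}$—with a secondary obstacle in checking that each of the three simulator branches is statistically faithful.
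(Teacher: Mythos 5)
Your proof is correct and follows essentially the same route as the paper's. The one structural difference is in the zero-knowledge simulator: you guess which challenge the verifier \emph{will} ask and rewind on a wrong guess (expected polynomial time), whereas the paper guesses which challenge the verifier will \emph{not} ask, prepares to answer the other two, and outputs $\bot$ if the guess is wrong (strict polynomial time with success probability negligibly close to $2/3$); these are dual formulations of the same Stern simulator and both are standard. In the extraction step you define $\mathbf{w}' := \mathbf{w}_2 - \mathbf{w}_3 \bmod q$ and show $\Gamma_\phi(\mathbf{w}') = \mathbf{t}_w \in \mathsf{VALID}$, while the paper defines $\mathbf{w}' := \Gamma_{\phi_2}^{-1}(\mathbf{t}_w)$ and then derives $\mathbf{w}' + \mathbf{w}_3 = \mathbf{w}_2$; these yield the same vector, so the arguments are interchangeable. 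The only wrinkle worth flagging is a slip of phrasing in your ZK paragraph: ``tailored to cheat in the two non-guessed cases'' does not match the rest of your description, since each of your simulated commitments is built to answer precisely the guessed challenge $\overline{Ch}$ and no other; the subsequent case-by-case construction is nevertheless correct.
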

    The proof of the Theorem~\ref{Theorem:zk-protocol}, which appeared in~\cite{AC:LLMNW16b}, employs standard simulation and extraction techniques for Stern-like protocols~\cite{AC:KawTanXag08,PKC:LNSW13}. The proof is provided in Appendix~\ref{appendix:zk-theorem} for the sake of completeness.

    \subsection{Password Strings and Password Policies}\label{subsection:prelim:passwords}
    Next, we present the models of password strings and policies, adapted from~\cite{KM14}.
    \smallskip

    \noindent
    \textbf{Password Strings.}
    We consider password strings $pw$ over the set of $94$ printable characters~$\Sigma_{\sf all}$ in the ASCII alphabet~$\Sigma_{\sf ASCII}$, where
    $\Sigma_{\sf all} = \Sigma_D \cup \Sigma_S \cup \Sigma_L \cup \Sigma_U \subset \Sigma_{\sf ASCII}$ is split into four disjoint subsets:
    \begin{itemize}\itemsep=0.1cm
    \item The set of $10$ digits $\Sigma_D = \{0,1,\ldots, 9\}$;
    \item The set of $32$ symbols $\Sigma_S = \big\{$ !"\#\$\%\&\rq ()*+,-./ :;$<=>$?@ [\textbackslash]\^ \_ \lq $\{|\}$ \textasciitilde~$\big\}$;
    \item The set of $26$ lower case letters, $\Sigma_L = \{a,b,\ldots, z\}$;
    \item The set of $26$ upper case letters, $\Sigma_U = \{A,B,\ldots, Z\}$.

    \end{itemize}
    We denote by $\mathtt{Dict}$ a general dictionary containing all strings that can be formed from the characters in $\Sigma_{\sf all}$.
    A password string $pw= (c_1, c_2, \ldots, c_k) \in \Sigma_{\sf all}^k \subset \mathtt{Dict}$ of length~$k$ is an ordered multi-set of characters $c_1, \ldots, c_k \in \Sigma_{\sf all}$.

    \medskip\noindent
    \textbf{Password Policies.}
    A password policy $f = \big((k_D, k_S, k_L, k_U), n_{\sf min}, n_\mathsf{max}\big)$ has six components, a minimum length $n_{\mathsf{min}}$, maximum length $n_\mathsf{max}$, and integers $k_D$, $k_S$, $k_L$ and $k_U$ that indicate the minimum number of digits, symbols, upper-case and lower-case letters, respectively, a password string must contain. We say that
    $f(pw) = \mathtt{true}$ if and only if policy $f$ is satisfied by the password string~$pw$.
    For instance,
    \begin{enumerate}\itemsep=0.1cm
    \item Policy $f = \big((1,1,1,1),8,16\big)$ indicates that password strings must be between $8$ and $16$ characters and contain at least one digit, one symbol, one lower-case and one upper-case letters.

    \item Policy $f = \big((0,2,0,1),10,14\big)$ demands that password strings must be between $10$ and $14$ characters, including at least two symbols and one upper-case letter.
    \end{enumerate}

    \begin{remark}
    In practice, password policies typically do not specify $n_\mathsf{max}$ but we can simply fix a number that upper-bounds all reasonable password lengths.
    \end{remark}

    \subsection{Randomised Password Hashing and Zero-Knowledge Password Policy Check}\label{subsection:prelim:zkppc}
    We now recall the notions of randomised password hashing and zero-knowledge password policy check. Our presentation follows~\cite{KM14,Kie16}.

    \smallskip\noindent
    \textbf{Randomised Password Hashing.} This mechanism aims to compute some password verification information that can be used later in more advanced protocols (e.g., ZKPPC and VPAKE).
    In order to prevent off-line dictionary attacks, the computation process is randomised via a pre-hash salt and hash salt. More formally, a randomised password hashing scheme $\mathcal{H}$ is
    a tuple of $5$ algorithms $\mathcal{H}  = (\mathsf{Setup}, \mathsf{PreSalt}, \mathsf{PreHash}, \mathsf{Salt}, \mathsf{Hash})$, defined as follows.
    \begin{itemize}\itemsep=0.1cm
        \item $\mathsf{Setup}(\lambda)$: On input security parameter $\lambda$, generate public parameters~$pp$, including the descriptions of the salt spaces $\mathbb{S}_P$ and $\mathbb{S}_H$.
        \item $\mathsf{PreSalt}(pp)$: On input $pp$, output a random pre-hash salt $s_P \in \mathbb{S}_P$.
        \item $\mathsf{PreHash}(pp,pw,s_P)$: On input $pp$, password $pw$ and pre-hash salt $s_P$, output a pre-hash value $P$.
        \item $\mathsf{Salt}(pp)$: On input $pp$, output a random hash salt $s_H \in \mathbb{S}_H$.
        \item $\mathsf{Hash}(pp,P,s_P,s_H)$: On input $pp$, pre-hash value $P$, pre-hash salt $s_P$ and hash salt $s_H$, output a hash value $\mathbf{h}$.
    \end{itemize}

    A secure randomised password hashing scheme $\mathcal{H}$ must satisfy $5$ requirements: \emph{pre-image resistance}, \emph{second pre-image resistance}, \emph{pre-hash entropy preservation}, \emph{entropy preservation} and
    \emph{password hiding}.

    \begin{itemize}\itemsep=0.12cm
        \item Pre-image resistance (or tight one-wayness in~\cite{BP13}):
        Let $\text{\emph{pp}} \leftarrow \mathsf{Setup}(\lambda)$ and $\mathtt{Dict}$ be a dictionary of min-entropy $\beta$.
        $Hash(\cdot)$ is a function such that $(H_i,s_{H_i}) \gets Hash(\cdot)$, where $s_{H_i} \leftarrow \mathsf{Salt}(pp)$ and $H_i \leftarrow \mathsf{Hash}(pp,P_i,s_{P_i},s_{H_i})$.
        $P_i \leftarrow \mathsf{PreHash}(pp,pw_i,s_{P_i})$ with $s_{P_i} \leftarrow$ $\mathsf{PreSalt}(pp)$ and \emph{pw}$_i$ $\xleftarrow{\$} \mathtt{Dict}$.
        $P_i$ is stored by $Hash(\cdot)$ and there is a function $\mathtt{Verify}(i,P)$ such that $\mathtt{Verify}(i,P) = 1$ if $P = P_i$.

        \smallskip
        For all \textsf{PPT} adversaries $\mathcal{A}$ running in time at most $t$, there exists a negligible function $\varepsilon(\cdot)$ such that \vspace{-0.15cm}
        \begin{align*}
            Pr[(i,P) \gets \mathcal{A}^{Hash(\cdot)});\text{ }\mathtt{Verify}(i,P) = 1] \leq \frac{\alpha t}{2^\beta t_{\mathsf{PreHash}}} + \varepsilon(\lambda)
        \end{align*}\vspace{-0.15cm}
        for small $\alpha$ and $t_{\mathsf{PreHash}}$, the running time of $\mathsf{PreHash}$. \smallskip
        \item Second pre-image resistance: For all \textsf{PPT} adversaries $\mathcal{A}$, there exists a negligible function $\varepsilon(\cdot)$ such that for $P' \leftarrow \mathcal{A}(pp,P,s_H)$,\vspace{-0.15cm}
        \begin{align*}
            Pr\big[\big(P' \neq P\big) \wedge \big(\mathsf{Hash}(pp,P,s_H) = \mathsf{Hash}(pp,P',s_H)\big)\big] \leq \varepsilon(\lambda),\vspace{-0.3cm}
        \end{align*}
        where $pp \leftarrow \mathsf{Setup}(\lambda)$, $s_P \leftarrow \mathsf{PreSalt}(pp)$, $s_H \leftarrow \mathsf{Salt}(pp)$ and $P \leftarrow \mathsf{Hash}(pp,\text{\emph{pw}},s_P)$ for any \emph{pw} $\in \mathtt{Dict}$.
        \item Pre-hash entropy preservation: For all dictionaries $\mathtt{Dict}$ that are samplable in polynomial time with min-entropy~$\beta$ and any \textsf{PPT} adversary $\mathcal{A}$, there exists a negligible function $\varepsilon(\cdot)$ such that for $(P,s_P) \leftarrow \mathcal{A}(pp)$ with $pp \leftarrow \mathsf{Setup}(\lambda)$ and random password \emph{pw} $\xleftarrow{\$} \mathtt{Dict}$,\vspace{-0.15cm}
        \begin{align*}
            Pr\big[\big(s_P \in \mathbb{S}_P\big) \wedge \big(P = \mathsf{PreHash}(pp, \text{\emph{pw}}, s_P)\big)\big] \leq 2^{-\beta} + \varepsilon(\lambda). \vspace{-0.3cm}
        \end{align*}
        \item Entropy preservation: For all min-entropy~$\beta$ polynomial-time samplable dictionaries $\mathtt{Dict}$ and any \textsf{PPT} adversary $\mathcal{A}$, there exists a negligible function $\varepsilon(\cdot)$ such that for $(H,s_P,s_H) \leftarrow \mathcal{A}(pp)$\vspace{-0.15cm}
        \begin{align*}
            Pr\big[\big(s_P\in \mathbb{S}_P\big) \wedge \big(s_H \in \mathbb{S}_H \wedge H = \mathsf{Hash}(pp,\text{\emph{pw}},s_P,s_H)\big)\big] \leq 2^{-\beta} + \varepsilon(\lambda),\vspace{-1cm}
        \end{align*}
        where $pp \leftarrow \mathsf{Setup}(\lambda)$ and \emph{pw} $\xleftarrow{\$} \mathtt{Dict}$.
        \item Password hiding: For all \textsf{PPT} adversaries $\mathcal{A} = (\mathcal{A}_1,\mathcal{A}_2)$, where $\mathcal{A}_1(pp)$ outputs two equal length passwords $\text{\emph{pw}}_0$, $\text{\emph{pw}}_1$ for $pp \leftarrow \mathsf{Setup}(\lambda)$ and $\mathcal{A}_2(H)$ outputs a bit $b'$ for $H \leftarrow \mathsf{Hash}(pp,P,s_P,s_H)$, where $s_H \leftarrow \mathsf{Salt}(\lambda)$, $s_P \leftarrow \mathsf{PreSalt}(\lambda)$ and $P \leftarrow \mathsf{PreHash}(pp,\text{\emph{pw}}_b,s_P)$ for a random bit $b \xleftarrow{\$} \{0,1\}$, there exists a negligible function $\varepsilon(\cdot)$ such that \vspace{-0.15cm}
            \begin{align*}
                \big|Pr[b = b'] - \tfrac{1}{2}\big| \leq \varepsilon(\lambda).\vspace{-0.3cm}
            \end{align*}
    \end{itemize}

    \medskip
    \noindent
    {\bf Zero-Knowledge Password Policy Check. } Let $\mathcal{H}  = (\mathsf{Setup}, \mathsf{PreSalt}, \mathsf{PreHash}, \mathsf{Salt}, \mathsf{Hash})$ be a randomised password hashing scheme.
    A password policy check (PPC) is an interactive protocol between a client and server where the password policy~$f= \big((k_D, k_S, k_L, k_U), n_{\sf min}, n_{\sf max}\big)$ of the server and public parameters $pp \leftarrow \mathsf{Setup}(\lambda)$ are used as common inputs.
    At the end of the execution, the server accepts a hash value $\mathbf{h}$ of any password $pw$ of the client's choice if and only if $f(pw) = \mathtt{true}$.
    A PPC protocol is an argument of knowledge of the password $pw$ and ssrandomness $s_P \leftarrow \mathsf{PreSalt}(pp)$, $s_H \leftarrow \mathsf{Salt}(pp)$ used for hashing.
    To prevent leaking the password to the server, one additionally requires that the protocol be zero-knowledge.

    More formally, a zero-knowledge PPC protocol is an interactive protocol between  a prover (client) and verifier (server), in which, given $(pp, f, \mathbf{h})$ the former convinces the later in zero-knowledge that the former knows $pw$ and randomness $(s_P, s_H)$ such that: \vspace{-0.2cm}
    \[
    f(pw) = \mathtt{true} \hspace*{16pt}\text{       and       }\hspace*{16pt} \mathsf{Hash}(pp,P,s_P,s_H) = \mathbf{h},\vspace{-0.2cm}
    \]
    where $P\leftarrow \mathsf{PreHash}(pp,pw,s_P)$.

\section{Our Constructions}\label{section:construction}
    To construct randomised password hashing schemes and ZKPPC protocols from concrete computational assumptions, the first challenge is to derive a password encoding mechanism that operates securely and interacts smoothly with the hashing and zero-knowledge layers.
    In the discrete log setting considered in~\cite{KM14}, passwords are mapped to large integers and then encoded as elements in a group of large order.
    Unfortunately, this does not translate well to the lattice setting as working with large-norm objects usually reduces the security and efficiency of the construction.
    Therefore, a different method, which encodes passwords as small-norm objects, is desirable.
    In this work, we will therefore use binary vectors.

    Let $\mathsf{bin}(\cdot)$ be the function that maps non-negative integers to their binary decomposition.
    For any character $c$ encoded in ASCII, let $\mathtt{ASCII}(c) \in [0, 255]$ be its code.
    Then, we define $\encode(c)$ for an ASCII encoded character $c$ and $\encode(pw)$ for some length-$t$ password $pw = (c_1, \ldots, c_t) \in \Sigma^t$ as \vspace{-0.2cm}
        \begin{align*}
    \encode(c) &= \mathsf{bin}(\mathtt{ASCII}(c)) \in \{0,1\}^8, \\
    \mathsf{encode}(pw) &= \big(\encode(c_1) \| \ldots \| \encode(c_t)\big) \in \{0,1\}^{8t}.\vspace{-0.5cm}
    \end{align*}

    \subsection{Notations, Sets and Permutations}\label{subsection:sets-permutations}
    Let $\mathfrak{m}, \mathfrak{n}$ be arbitrary positive integers. We define the following sets and permutations:

    \begin{itemize}\itemsep=1mm
    \item[$\diamond$] $\mathsf{B}_{\mathfrak{m}}^2$: the set of all vectors in $\{0,1\}^{2\mathfrak{m}}$ whose Hamming weight is exactly $\mathfrak{m}$. Note that for $\mathbf{x} \in \mathbb{Z}^{2\mathfrak{m}}$ and $\psi \in \mathcal{S}_{2\mathfrak{m}}$ the following holds:
        \begin{eqnarray}\label{eq:Bm2}
        \begin{cases}
            \mathbf{x} \in \mathsf{B}_{\mathfrak{m}}^2 \hspace*{6.8pt} \Leftrightarrow \hspace*{6.8pt} \psi(\mathbf{x}) \in \mathsf{B}_{\mathfrak{m}}^2; \\
            \mathbf{x} \in \mathsf{B}_{\mathfrak{m}}^2 \text{ and } \psi \xleftarrow{\$} \mathcal{S}_{2\mathfrak{m}}, \text{ then } \psi(\mathbf{x}) \text{ is uniform over } \mathsf{B}_{\mathfrak{m}}^2.
        \end{cases}
        \end{eqnarray}
    \item[$\diamond$] $T_{\psi, \mathfrak{n}}$, for $\psi \in \mathcal{S}_{\mathfrak{m}}$: the permutation that, when applied to a vector $\mathbf{v} = (\mathbf{v}_1 \| \mathbf{v}_2 \| \ldots \| \mathbf{v}_{\mathfrak{m}}) \in \mathbb{Z}^{\mathfrak{n}\mathfrak{m}}$, consisting of $\mathfrak{m}$ blocks of size $\mathfrak{n}$, re-arranges the blocks of $\mathbf{v}$ according to $\psi$, as follows,
    \[
    T_{\psi, \mathfrak{n}}(\mathbf{v}) = (\mathbf{v}_{\psi(1)} \| \mathbf{v}_{\psi(2)}\| \ldots \| \mathbf{v}_{\psi(\mathfrak{n})}).
    \]

    \end{itemize}

    \medskip
    \noindent
    For convenience, when working with password alphabet $\Sigma_{\sf all} = \Sigma_D \cup \Sigma_S \cup \Sigma_L \cup \Sigma_U$ and password policy $f = \big((k_D, k_U, k_L, k_S), n_{\mathtt{min}}, n_{\mathtt{max}}\big)$, we introduce the following notations and sets:
    \begin{itemize}\itemsep=1mm
    \item[$\diamond$] $\eta_D = |\Sigma_D|= 10$, $\eta_S = |\Sigma_S| =32$, $\eta_L = |\Sigma_L| = 26$, $\eta_U = |\Sigma_U| = 26$ and $\eta_{\sf all} = |\Sigma_{\sf all}|=94$. \smallskip
    \item[$\diamond$] For $\alpha \in \{D, S, L, U, \mathsf{all}\}$: $\mathsf{Enc}_\alpha = \{\encode(w)) \hspace*{2.6pt} \big| \hspace*{2.6pt} w \in \Sigma_\alpha\}$. \smallskip
    \item[$\diamond$] $\mathsf{SET}_\alpha$ for $\alpha \in \{D, S, L, U, \mathsf{all}\}$: the set of all vectors $\mathbf{v} = (\mathbf{v}_1 \| \ldots \| \mathbf{v}_{\eta_\alpha}) \in \{0,1\}^{8\eta_\alpha}$, such that the blocks $\mathbf{v}_1, \ldots, \mathbf{v}_{\eta_\alpha}  \in \{0,1\}^8$ are exactly the binary encodings of all characters in $\Sigma_\alpha$, i.e.,
        \[
        \big\{\mathbf{v}_1,  \ldots , \mathbf{v}_{\eta_\alpha}\big\} = \big\{\encode(w)): \hspace*{2.6pt} \hspace*{2.6pt} w \in \Sigma_\alpha\big\}.
        \]
    \item[$\diamond$] $\mathsf{SET}_{n_{\sf max}}$: the set of all vectors $\mathbf{v} = (\mathbf{v}_1 \| \ldots \| \mathbf{v}_{n_{\sf max}}) \in \{0,1\}^{n_{\sf max}\lceil\log n_{\sf max}\rceil}$, such that the blocks $\mathbf{v}_1, \ldots, \mathbf{v}_{n_{\sf max}} \allowdisplaybreaks \in \{0,1\}^{\lceil\log n_{\sf max}\rceil}$ are exactly the binary decompositions of all integers in $[n_{\sf max}]$, i.e.,
        \[
        \big\{\mathbf{v}_1,  \ldots , \mathbf{v}_{n_{\sf max}}\big\} = \big\{\mathsf{bin}(1), \ldots, \mathsf{bin}(n_{\sf max})\big\}.
        \]
    \end{itemize}
    Observe that the following properties hold.
    \begin{itemize}\itemsep=1mm
    \item[$\diamond$] For all $\alpha \in \{D, S, L, U, \mathsf{all}\}$, all $\mathbf{x} \in \mathbb{Z}^{8\eta_\alpha}$ and all $\psi \in \mathcal{S}_{\eta_\alpha}$:
        \begin{eqnarray}\label{eq:SET-DSLUall}
        \begin{cases}
            \mathbf{x} \in \mathsf{SET}_{\alpha} \hspace*{6.8pt} \Leftrightarrow \hspace*{6.8pt} T_{\psi, 8}(\mathbf{x})  \in \mathsf{SET}_{\alpha}; \\
            \mathbf{x} \in \mathsf{SET}_{\alpha} \text{ and } \psi \xleftarrow{\$} \mathcal{S}_{\eta_\alpha}, \text{ then } T_{\psi, 8}(\mathbf{x}) \text{ is uniform over } \mathsf{SET}_{\alpha}.
        \end{cases}
        \end{eqnarray}
    \item[$\diamond$] For all $\mathbf{x} \in \mathbb{Z}^{n_{\sf max}\lceil\log n_{\sf max}\rceil}$ and all $\psi \in \mathcal{S}_{n_{\sf max}}$:
        \begin{eqnarray}\label{eq:SET-nmin}
        \begin{cases}
            \mathbf{x} \in \mathsf{SET}_{n_{\sf max}} \hspace*{6.8pt} \Leftrightarrow \hspace*{6.8pt} T_{\psi, \lceil\log n_{\sf max}\rceil}(\mathbf{x}) \in \mathbf{x} \in \mathsf{SET}_{n_{\sf max}}; \\
            \mathbf{x}  \in \mathsf{SET}_{n_{\sf max}} \text{ and } \psi \xleftarrow{\$} \mathcal{S}_{n_{\sf max}}, \text{ then } T_{\psi, \lceil\log n_{\sf max}\rceil}(\mathbf{x}) \text{ is uniform over } \mathsf{SET}_{n_{\sf max}}.
        \end{cases}
        \end{eqnarray}
    \end{itemize}

    \subsection{Randomised Password Hashing from Lattices}\label{Subsection:lattice-pwhash}
    We describe our randomised password hashing scheme~$\mathcal{L}$ for passwords 
    of length between two given integers $n_{\sf min}$ and $n_{\sf max}$.
    At a high level, our scheme maps characters of the password $pw$ to binary block vectors, re-arranges them with a random permutation $\chi$, and finally computes the password hash as a KTX commitment (\cite{AC:KawTanXag08}, see also Section~\ref{subsection:lattice-tools}) to a vector storing all the information on $pw$ and $\chi$.
    The scheme works as follows,
    \begin{description}\itemsep=1mm
      \item[$\mathcal{L}.\mathsf{Setup}(\lambda)$.] On input security parameter $\lambda$, the algorithm performs the following steps: \smallskip
          \begin{enumerate}\itemsep=1mm
          \item Choose parameters $n = \mathcal{O}(\lambda)$, prime modulus $q = \widetilde{\mathcal{O}}(n)$, and dimension $m = 2n\lceil\log q\rceil$.
          \item Sample matrices $\mathbf{A} \xleftarrow{\$} \mathbb{Z}_q^{n \times (n_{\sf max}\lceil\log n_{\sf max}\rceil+ 8n_{\sf max})}$ and $\mathbf{B} \xleftarrow{\$} \mathbb{Z}_q^{n \times m}$.
          \item Let the pre-hash salt space be $\mathbb{S}_P = \mathcal{S}_{n_{\sf max}}$ - the set of all permutations of $n_{\sf max}$ elements, and hash salt space be $\mathbb{S}_H = \{0,1\}^m$.
          \item Output the public parameters
          $pp = \big(n,q,m, \mathbb{S}_P, \mathbb{S}_H, \mathbf{A}, \mathbf{B}\big)$.
          \end{enumerate}

      \item[$\mathcal{L}.\mathsf{PreSalt}(pp)$.] Sample $\chi \xleftarrow{\$} \mathcal{S}_{n_{\sf max}}$ and output $s_P = \chi$.

      \item[$\mathcal{L}.\mathsf{PreHash}(pp,pw,s_P)$.] Let $s_P = \chi \in \mathcal{S}_{n_{\sf max}}$ and $t \in [n_{\sf min}, n_{\sf max}]$ be the length of password $pw$. The pre-hash value $P$ is computed as follows. \smallskip
          \begin{enumerate}
            \item Compute $\mathsf{encode}({pw}) \in \{0,1\}^{8t}$, consisting of $t$ blocks of length $8$. \smallskip
            \item Insert $n_{\sf max} - t~$ blocks of length $8$, each one being $\mathsf{enc}({g})$ for some non-printable ASCII character $g \in \Sigma_{\sf ASCII}\setminus\Sigma_{\sf all}$, into the block-vector $\mathsf{encode}({pw})$ to get $\mathbf{e}  \in \{0,1\}^{8n_{\sf max}}$.\footnote{This hides the actual length $t$ of the password in the ZKPPC protocol in Section~\ref{Subsection:lattice-zkppc}.}
            \item Apply $T_{\chi, 8}$ to get $\mathbf{e}' = T_{\chi,8}(\mathbf{e}) \in \{0,1\}^{8n_{\sf max}}$. \smallskip
            \item Output the pre-hash value $P = \mathbf{e}'$.
          \end{enumerate}

      \item[$\mathcal{L}.\mathsf{Salt}(pp)$.] Sample $\mathbf{r} \xleftarrow{\$} \{0,1\}^m$ and output $s_H = \mathbf{r}$.

      \item[$\mathcal{L}.\mathsf{Hash}(pp,P,s_P,s_H)$.] Let $P = \mathbf{e}' \in \{0,1\}^{8n_{\sf max}}$, $s_P = \chi \in \mathcal{S}_{n_{\sf max}}$ and $s_H = \mathbf{r} \in \{0,1\}^m$. The hash value $\mathbf{h}$ is computed as follows, \smallskip
        \begin{enumerate}
      \item Express the permutation $\chi$ as $\chi = [\chi(1), \ldots, \chi(n_{\sf max})]$, where for each $i \in [n_{\sf max}]$, $\chi(i) \in [n_{\sf max}]$. Then, form
               \[
               \mathbf{e}_0 = \big(\mathsf{bin}(\chi(1)-1) \| \ldots \| \mathsf{bin}(\chi(n_{\sf max})-1)\big) \in \{0,1\}^{n_{\sf max}\lceil\log n_{\sf max}\rceil}.
               \]
      \item Form $\mathbf{x} = (\mathbf{e}_0 \| \mathbf{e}') \in \{0,1\}^{n_{\sf max}\lceil\log n_{\sf max}\rceil+ 8n_{\sf max}}$ and output $\mathbf{h} = \mathbf{A} \cdot \mathbf{x} + \mathbf{B} \cdot \mathbf{r}\in \mathbb{Z}_q^n$.
      \end{enumerate}
    \end{description}

    In the following theorem, we demonstrate that the proposed scheme satisfies the security requirements defined in Section~\ref{subsection:prelim:zkppc}.
    \begin{theorem}
    Under the $\mathsf{SIS}$ assumption, the randomised password hashing scheme, $\mathcal{L}$, described above satisfies $5$ requirements: \emph{pre-image resistance}, \emph{second pre-image resistance}, \emph{pre-hash entropy preservation}, \emph{entropy preservation} and
    \emph{password hiding}.
    \end{theorem}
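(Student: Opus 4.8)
\medskip\noindent\textbf{Proof (sketch).}
The plan is to derive the five properties from the two security features of the KTX commitment scheme recalled in Section~\ref{subsection:lattice-tools} --- statistical hiding via the leftover hash lemma, and computational binding via $\mathsf{SIS}$ --- together with information-theoretic arguments about the encoding map $\encode$ and the pre-hash salt space $\mathcal{S}_{n_{\sf max}}$. Two properties are almost immediate. For \emph{password hiding}, a hash value has the form $\mathbf{h} = \mathbf{A}\cdot\mathbf{x} + \mathbf{B}\cdot\mathbf{r}\bmod q$ with $\mathbf{r}\xleftarrow{\$}\{0,1\}^m$ and $m = 2n\lceil\log q\rceil$; the leftover hash lemma makes $(\mathbf{B},\mathbf{B}\cdot\mathbf{r})$ statistically close to $(\mathbf{B},\mathbf{u})$ for $\mathbf{u}\xleftarrow{\$}\mathbb{Z}_q^n$ even conditioned on $\mathbf{A}$ and $\mathbf{x}$, so $\mathbf{h}$ is statistically independent of which of $pw_0,pw_1$ was hashed and every (even unbounded) distinguisher has advantage $\mathsf{negl}(\lambda)$. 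For \emph{second pre-image resistance}, fix the salts; a collision $P\neq P'$ keeps the $\mathbf{e}_0$-block (determined by $s_P$) and the $\mathbf{B}\cdot\mathbf{r}$-part unchanged, so it yields $\mathbf{A}\cdot(\mathbf{x}-\mathbf{x}')=\mathbf{0}\bmod q$ where $\mathbf{x}-\mathbf{x}'=(\mathbf{0}\|P-P')$ is a nonzero vector of infinity norm $1$ --- i.e. a solution to $\mathsf{SIS}^\infty_{n,\,n_{\sf max}\lceil\log n_{\sf max}\rceil+8n_{\sf max},\,q,\,1}$ for the uniform matrix $\mathbf{A}$, breaking the binding of KTX.

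\smallskip\noindent
The two \emph{entropy-preservation} properties exploit that the adversary commits to its output before the random password $pw\xleftarrow{\$}\mathtt{Dict}$ is drawn, and that $pw$ can be read back from its padded block-encoding $\mathbf{e}$ by collecting the printable blocks in order. \emph{Pre-hash entropy preservation} is then purely information-theoretic: winning forces $\chi'\in\mathcal{S}_{n_{\sf max}}$ and $\mathbf{e}^\ast=T_{\chi',8}(\mathbf{e})$, and since $T_{\chi',8}$ is a permutation, $(\mathbf{e}^\ast,\chi')$ determines $\mathbf{e}$ and hence $pw$, so the event pins $pw$ to a single value and has probability at most $2^{-\beta}$. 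For \emph{entropy preservation}, condition on the adversary's output $(\mathbf{h}^\ast,\chi',\mathbf{r}')$ and let $\delta$ be the resulting winning probability over $pw\xleftarrow{\$}\mathtt{Dict}$; sampling the password twice independently, both runs win with probability at least $\delta^2$ (by Jensen after conditioning), while both winning with identical passwords occurs with probability at most $\delta\cdot2^{-\beta}$ (min-entropy of $\mathtt{Dict}$), and in the complementary case the two distinct passwords give two distinct binary vectors mapped by $\mathbf{A}$ to the same element (the $\mathbf{e}_0$- and $\mathbf{B}\mathbf{r}'$-parts cancel), i.e. a nonzero $\mathbf{z}$ with $\|\mathbf{z}\|_\infty\le 1$ and $\mathbf{A}\mathbf{z}=\mathbf{0}\bmod q$. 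Thus $\mathsf{SIS}$ is solved with probability at least $\delta(\delta-2^{-\beta})$, forcing $\delta\le 2^{-\beta}+\mathsf{negl}(\lambda)$ under the $\mathsf{SIS}$ assumption.

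\smallskip\noindent
\emph{Pre-image resistance} is the main obstacle, because the $Hash(\cdot)$ oracle reveals each salt $s_{H_i}=\mathbf{r}_i$ alongside $H_i=\mathbf{A}\mathbf{x}_i+\mathbf{B}\mathbf{r}_i\bmod q$, hence leaks the value $\mathbf{A}\mathbf{x}_i\bmod q$. The plan is: first, use the leftover hash lemma to show that, beyond $\{\mathbf{A}\mathbf{x}_i\}_i$, the oracle answers are statistically independent of the hidden $pw_i$, $s_{P_i}$ and $P_i$; second, argue that recovering $P_i$ from $\mathbf{A}\mathbf{x}_i$ amounts to inverting the $\mathsf{SIS}$ function on a short binary pre-image of dimension $n_{\sf max}\lceil\log n_{\sf max}\rceil+8n_{\sf max}$ and is thus infeasible under $\mathsf{SIS}$, an advantage we absorb into $\varepsilon(\lambda)$; third, conclude that up to negligible terms the only winning strategy is to recompute $P_i=T_{\chi_i,8}(\cdot)$ by guessing $pw_i$ correctly (also guessing $\chi_i$, or guessing $P$ blindly, only lowers the success probability), each guess costing one $\mathsf{PreHash}$ evaluation (time $t_{\mathsf{PreHash}}$) and matching $pw_i$ with probability at most $2^{-\beta}$, so that a union bound over the at most $t/t_{\mathsf{PreHash}}$ guesses a time-$t$ adversary can afford gives the claimed bound $\tfrac{\alpha t}{2^{\beta}t_{\mathsf{PreHash}}}+\varepsilon(\lambda)$ for a small constant $\alpha$. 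The delicate point is making steps one through three rigorous simultaneously --- in particular, turning ``the leaked $\mathbf{A}\mathbf{x}_i$ confers no noticeable help'' into a clean reduction to $\mathsf{SIS}$, and justifying that the running-time budget genuinely caps the number of effective password guesses. $\qed$
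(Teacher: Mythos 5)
Your sketch follows the same overall route as the paper's proof: statistical hiding of the KTX commitment gives password hiding, computational binding gives second pre-image resistance, injectivity of $\encode$ and of $T_{\chi,8}$ gives pre-hash entropy preservation, and pre-image resistance is argued by bounding the number of password guesses an adversary can afford. Two of your steps, though, are genuinely more rigorous than what the paper records. For second pre-image resistance you make explicit that the difference $(\mathbf{0}\,\|\,P-P')$ is a nonzero vector of infinity norm $1$ in the kernel of $\mathbf{A}$, i.e.\ a direct $\mathsf{SIS}$ solution, rather than just invoking ``binding of KTX''; this is the same fact, but worth having spelled out. More substantially, for \emph{entropy preservation} the paper only says that hiding $s_P$ preserves the min-entropy of $pw$ in $P$, which addresses the pre-hash but not the hash value $H\in\mathbb{Z}_q^n$; your two-sample (``self-collision'') reduction --- conditioning on the adversary's output $(\mathbf{h}^\ast,\chi',\mathbf{r}')$, drawing $pw$ twice independently so that both win with probability $\delta^2$ by Jensen, they coincide with probability at most $\delta\cdot 2^{-\beta}$, and otherwise the two distinct encodings yield a nonzero $\{-1,0,1\}$-vector in $\ker\mathbf{A}$ --- supplies the missing reduction and yields the claimed bound $\delta\le 2^{-\beta}+\mathsf{negl}(\lambda)$. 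This is a nontrivial addition.

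On \emph{pre-image resistance} neither you nor the paper gives a complete argument. The paper asserts that $\mathbf{A}\cdot\mathbf{x}$ ``accepts at least $2^\beta$ values'' except on a negligible event, and that this forces $2^\beta$ invocations of $\mathsf{PreHash}$; you correctly identify that this really needs (i) a statistical argument that the oracle leaks nothing beyond $\mathbf{A}\mathbf{x}_i$, (ii) a reduction to $\mathsf{SIS}$ for directly inverting $\mathbf{A}\mathbf{x}_i$, and (iii) a counting argument turning the time budget $t$ into at most $t/t_{\mathsf{PreHash}}$ effective guesses, and you flag that combining these cleanly is the delicate point. That honest caveat is fair; the paper's own treatment has the same gap. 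In short, your proposal is correct, matches the paper's structure, and improves the entropy-preservation step, while leaving pre-image resistance at the same sketch level as the paper.
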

    \begin{proof}
    First, we remark that, by construction, if the pre-hash salt $s_P = \chi$ is given, then we can reverse the procedure used to extend the length $t$ password by simply discarding any non-printable characters after applying the inverse of the permutation specified by $s_P$.
    Hence, if $s_P$ is hidden, then due to its randomness, the min-entropy of $P$ is larger than the min-entropy of $pw$.
    Thus, the proposed hashing scheme has the pre-hash entropy preservation and entropy preservation properties.

    Next, note that $\mathbf{h} = \mathbf{A}\cdot \mathbf{x} + \mathbf{B} \cdot \mathbf{r} \bmod q$ is a proper KTX commitment of message~$\mathbf{x}$ with randomness $\mathbf{r}$.
    Thus, from the statistical hiding property of the commitment scheme, the password hiding property holds.

    Furthermore, if one can produce distinct pre-hash values $P$, $P'$ that yield the same hash value $\mathbf{h}$, then one can use these values to break the computational binding property of the KTX commitment scheme.
    This implies that second pre-image resistance property holds under the \textsf{SIS} assumption.

    Finally, over the randomness of matrix $\mathbf{A}$, password $pw$ and pre-hash salt $s_P$, except for a negligible probability (i.e., in the event one accidentally finds a solution to the \textsf{SIS} problem associated with matrix $\mathbf{A}$), vector $\mathbf{A}\cdot \mathbf{x}$ accepts at least $2^\beta$ values in $\mathbb{Z}_q^n$, where $\beta$ is the min-entropy of the dictionary $\mathtt{Dict}$ from which $pw$ is chosen.
    Therefore, even if $\mathbf{A}\cdot \mathbf{x} = \mathbf{h} - \mathbf{B}\cdot s_H \bmod q$ is given, to find $P = \mathbf{e}'$, one has to perform $2^\beta$ invocations of $\mathsf{PreHash}$ which implies that the scheme satisfies the pre-image resistance property.
    \qed
    \end{proof}

    \subsection{Techniques for Proving Set Membership}\label{Subsection:lattice-zksetmembership}
    In our construction of ZKPPC in Section~\ref{Subsection:lattice-zkppc}, we will have to prove that a linear relation of the form
    \[
    \displaystyle\sum_i{\big(\text{public matrix } \mathbf{M}_i \big) \cdot \big(\text{binary secret vector } \mathbf{s}_i\big)} = \mathbf{h} \bmod q
    \]
    holds, where each secret vector $\mathbf{s}_i$ must be an element of a given set of relatively small cardinality, e.g., $\mathsf{Enc}_D, \mathsf{Enc}_S, \mathsf{Enc}_L, \mathsf{Enc}_U, \mathsf{Enc}_{\sf all}$. Thus, we need to design suitable sub-protocols to prove set membership.

    In the lattice-based world, a set membership argument system with logarithmic complexity in the cardinality of the set was proposed in~\cite{EC:LLNW16}, exploiting Stern-like protocols and Merkle hash trees.
    Despite its asymptotic efficiency, the actual efficiency is worse when the underlying set has small, constant size.
    To tackle the problems encountered here, we employ a different approach, which has linear complexity but is technically simpler and practically more efficient.

    Suppose we have to prove that an $\mathfrak{n}$-dimensional vector $\mathbf{s}_i$ belongs to a set of $\mathfrak{m}$ vectors $\{\mathbf{v_1}, \ldots, \mathbf{v}_{\mathfrak{m}}\}$.
    To this end, we append $\mathfrak{m}-1$ blocks to vector $\mathbf{s}_i$ to get an $\mathfrak{n}\mathfrak{m}$-dimensional vector $\mathbf{s}_i^\star$ whose $\mathfrak{m}$ blocks are exactly elements of the set $\{\mathbf{v_1}, \ldots, \mathbf{v}_{\mathfrak{m}}\}$.
    At the same time, we append $\mathfrak{n}(\mathfrak{m}-1)$ zero-columns to public matrix $\mathbf{M}_i$ to get matrix $\mathbf{M}^\star_i$ satisfying $\mathbf{M}^\star_i \cdot \mathbf{s}^\star_i = \mathbf{M}_i \cdot \mathbf{s}_i$, so that we preserve the linear equation under consideration.
    In this way, we reduce the set-membership problem to the problem of proving the well-formedness of $\mathbf{s}^\star_i$.
    The latter can be done via random permutations of blocks in the framework of Stern's protocol.
    For instance, to prove that $\mathbf{s}_i \in \mathsf{Enc}_D$, i.e., $\mathbf{s}_i$ is a correct binary encoding of a digit, we extend it to $\mathsf{s}^\star_i \in \mathsf{SET}_D$, apply a random permutation to the extended vector, and make use of the properties observed in~(\ref{eq:SET-DSLUall}).

    \subsection{Zero-Knowledge Password Policy Check Protocol}\label{Subsection:lattice-zkppc}
    We now present our construction of ZKPPC from lattices.
    Throughout, we use notations, sets and permutation techniques specified in Section~\ref{subsection:sets-permutations} to reduce the statement to be proved to an instance of the relation $\mathrm{R}_{\rm abstract}$ considered in Section~\ref{subsection:prelim:Stern}, which in turn can be handled by the Stern-like protocol of Figure~\ref{Figure:Interactive-Protocol}.

    Our protocol allows a prover~$\mathcal{P}$ to convince a verifier~$\mathcal{V}$ in ZK that $\mathcal{P}$ knows a password $pw$ that hashes to a given value with randomness $\chi, \mathbf{r}$, and satisfies some policy $f = \big((k_D, k_U, k_L, k_S), n_{\mathtt{min}}, n_{\mathtt{max}}\big)$.%
    \footnote{The construction we present considers the scenario where $k_D, k_S, k_L, k_U$ are all positive.
    Our scheme can be easily adjusted to handle the case where one or more of them are~$0$.}
    Recall that $\mathcal{V}$ demands $pw$ must have length between $n_{\sf min}$ and $n_{\sf max}$ inclusive, contain at least $k_D$ digits, $k_S$ symbols, $k_L$ lower-case and $k_U$ upper-case letters.
    For simplicity, we let $k_{\sf all}= n_{\sf min} - (k_D + k_U + k_L + k_S).$

    The common input consists of matrices $\mathbf{A} \in \mathbb{Z}_q^{n \times (n_{\sf max}\lceil\log n_{\sf max}\rceil+ 8n_{\sf max})}, \mathbf{B}\in \mathbb{Z}_q^{n \times m}$, hash value $\mathbf{h} \in \mathbb{Z}_q^n$ and extra information \vspace{-0.2cm} $$\Delta = (\delta_{D,1}, \ldots, \delta_{D, k_D}, \delta_{S,1}, \ldots, \delta_{S, k_S}, \delta_{L, 1}, \ldots, \delta_{L, k_L}, \delta_{U, 1}, \ldots, \delta_{U, k_U}, \delta_{\mathsf{all}, 1}, \ldots, \delta_{\mathsf{all}, k_{\sf all}}) \in [n_{\sf max}]^{n_{\sf min}},\vspace{-0.2cm}$$
    which indicates the positions of the blocks, inside vector $P = \mathbf{e}'$, encoding $k_D$ digits, $k_S$ symbols, $k_L$ lower-case letters, $k_U$ upper-case letters and $k_{\sf all}$ other printable characters within $pw$.
    Revealing $\Delta$ to $\mathcal{V}$ does not harm $\mathcal{P}$, since the original positions of those blocks (in vector $\mathbf{e}$) are protected by the secret permutation $\chi$.

    The prover's witness consists of vectors
    $\mathbf{x} = (\mathbf{e}_0 \| \mathbf{e}') \in \{0,1\}^{n_{\sf max}\lceil\log n_{\sf max}\rceil+ 8n_{\sf max}}$ and $\mathbf{r} \in \{0,1\}^m$ satisfying the following conditions:
    \begin{enumerate}\itemsep=1mm
    \item $\mathbf{A} \cdot \mathbf{x} + \mathbf{B} \cdot \mathbf{r}= \mathbf{h} \bmod q$;
    \item $\mathbf{e}_0 = \big(\hspace*{2.8pt}\mathsf{bin}(\chi(1)-1) \hspace*{2.8pt}\| \ldots \| \hspace*{2.8pt} \mathsf{bin}(\chi(n_{\sf max}-1))\hspace*{2.8pt}\big)$;
    \item $\mathbf{e}'$ has the form $(\mathbf{x}_1, \ldots, \mathbf{x}_{n_{\sf max}})$, where, for all $\alpha \in \{D, S, L, U, \mathsf{all}\}$ and all $i \in [k_\alpha]$, it holds that $\mathbf{x}_{\delta_{\alpha, i}} \in \mathsf{Enc}_\alpha$.
    \end{enumerate}
    We first observe that, if we express matrix $\mathbf{A}$ as $\mathbf{A}= \big[\mathbf{A}_0 \mid \mathbf{A}_1 \mid \ldots \mid \mathbf{A}_{n_{\sf max}}\big]$, where $\mathbf{A}_0 \in \mathbb{Z}_q^{n_{\sf max}\lceil\log n_{\sf max}\rceil}$ and $\mathbf{A}_1, \ldots, \mathbf{A}_{n_{\sf max}} \in \mathbb{Z}_q^{n \times 8}$, then equation $\mathbf{A} \cdot \mathbf{x} + \mathbf{B} \cdot \mathbf{r}= \mathbf{h} \bmod q$ can be equivalently written as
    \begin{eqnarray}\label{eq:big-linear-equation}
    \displaystyle{
    \mathbf{A}_0 \cdot \mathbf{e}_0 + \sum_{\alpha \in \{D, S, L, U, \mathsf{all}\}, i \in [k_\alpha]} \mathbf{A}_{\delta_{\alpha,i}}\cdot \mathbf{x}_{\delta_{\alpha,i}} + \sum_{j \in [n_{\sf max}] \setminus \Delta} \mathbf{A}_j \cdot \mathbf{x}_j + \mathbf{B}\cdot \mathbf{r} = \mathbf{h} \bmod q.
    }
    \end{eqnarray}

    \noindent Note that, we have $\mathbf{e}_0 \in \mathsf{SET}_{n_{\sf max}}$. We next transform the witness vectors $\mathbf{x}_1, \ldots, \mathbf{x}_{n_{\sf max}}, \mathbf{r}$ as follows,
    \begin{itemize}
    \item[$\diamond$] For all $\alpha \in \{D, S, L, U, \mathsf{all}\}$ and all $i \in [k_\alpha]$, to prove that $\mathbf{x}_{\delta_{\alpha, i}} \in \mathsf{Enc}_\alpha$, we append $\eta_{\alpha}-1$ suitable blocks to $\mathbf{x}_{\delta_{\alpha, i}}$ to get vector $\mathbf{x}^\star_{\delta_{\alpha, i}} \in \mathsf{SET}_\alpha$. \medskip
    \item[$\diamond$] For vectors $\{\mathbf{x}_j\}_{j \in [n_{\sf max}] \setminus \Delta}$, note that it is necessary and sufficient to prove that they are binary vectors (namely, they are encoding of characters that may or may not be printable).
    Similarly, we have to prove that $\mathbf{r}$ is a binary vector.
    To this end, we let $\mathbf{y} \in \{0,1\}^{8(n_{\sf max}- n_{\sf min})}$ be a concatenation of all $\{\mathbf{x}_j\}_{j \in [n_{\sf max}] \setminus \Delta}$ and $\mathbf{z} = (\mathbf{y} \| \mathbf{r}) \in \{0,1\}^{8(n_{\sf max}- n_{\sf min}) +m}$.
    Then, we append suitable binary entries to $\mathbf{z}$ to get $\mathbf{z}^\star \in \{0,1\}^{2(8(n_{\sf max}- n_{\sf min}) +m)}$ with Hamming weight exactly $8(n_{\sf max}- n_{\sf min}) +m$, i.e., $\mathbf{z}^\star \in \mathsf{B}^2_{8(n_{\sf max}- n_{\sf min}) + m}$.
    \end{itemize}
    Having performed the above transformations, we construct the vector $\mathbf{w} \in \{0,1\}^\ell$, where \vspace{-0.2cm}$$\ell = n_{\sf max}\lceil\log n_{\sf max}\rceil + 8(k_D\eta_D + k_S\eta_S + k_L\eta_L + k_U\eta_U) + 8k_{\sf all}\eta_{\sf all} + 2(8(n_{\sf max}- n_{\sf min}) + m),\vspace{-0.2cm}$$
    and $\mathbf{w}$ has the form:\vspace{-0.2cm}
    \begin{eqnarray}\label{eq:protocol-witness-w}
    \nonumber\mathbf{w} = \big(\hspace*{2.6pt}
    \mathbf{e}_0 \hspace*{2.6pt}\|\hspace*{2.6pt}
    \mathbf{x}^\star_{\delta_{D,1}} \hspace*{2.6pt}\|\hspace*{2.6pt}
    \ldots \|\hspace*{2.6pt}
    \mathbf{x}^\star_{\delta_{D, k_D}} &&\|\hspace*{2.6pt}
    \mathbf{x}^\star_{\delta_{S,1}} \hspace*{2.6pt}\|\hspace*{2.6pt}
    \ldots \|
    \mathbf{x}^\star_{\delta_{S, k_S}} \hspace*{2.6pt}\|\hspace*{2.6pt}
    \mathbf{x}^\star_{\delta_{L,1}} \hspace*{2.6pt}\|\hspace*{2.6pt}
    \ldots \|\hspace*{2.6pt}
    \mathbf{x}^\star_{\delta_{L, k_L}} \\
    &&\|\hspace*{2.6pt}
    \mathbf{x}^\star_{\delta_{U,1}} \hspace*{2.6pt}\|\hspace*{2.6pt}
    \ldots \|\hspace*{2.6pt}
    \mathbf{x}^\star_{\delta_{U, k_U}} \hspace*{2.6pt}\|\hspace*{2.6pt}
    \mathbf{x}^\star_{\delta_{\mathsf{all},1}} \hspace*{2.6pt}\|\hspace*{2.6pt}
    \ldots \|\hspace*{2.6pt}
    \mathbf{x}^\star_{\delta_{\mathsf{all}, k_{\sf all}}} \hspace*{2.6pt}\|\hspace*{2.6pt}
    \mathbf{z}^\star \hspace*{2.6pt}
    \big).\vspace{-0.2cm}
    \end{eqnarray}

    When performing extensions over the secret vectors, we also append zero-columns to the public matrices in equation~(\ref{eq:big-linear-equation}) so that it is preserved.
    Then, we concatenate the extended matrices to get $\mathbf{M} \in \mathbb{Z}_q^{n \times \ell}$ such that~(\ref{eq:big-linear-equation}) becomes, with $\mathbf{v} = \mathbf{h} \in \mathbb{Z}_q^n$,
    \begin{equation}\label{eq:final-Mw=v}
    \mathbf{M}\cdot \mathbf{w} = \mathbf{v} \bmod q.
    \end{equation}

    We have now established the first step towards reducing the given statement to an instance of the relation $\mathrm{R}_{\rm abstract}$ from Section~\ref{subsection:prelim:Stern}.
    Next, we will specify the set $\mathsf{VALID}$ containing the vector $\mathbf{w}$, set $\mathcal{S}$ and permutations $\{\Gamma_\phi: \phi \in \mathcal{S}\}$ such that the conditions in~(\ref{eq:zk-equivalence}) hold.

    Define $\mathsf{VALID}$ as the set of all vectors $\mathbf{w} \in \{0,1\}^\ell$ having the form~(\ref{eq:protocol-witness-w}), where
    \begin{itemize}\itemsep=1mm
    \item[$\diamond$] $\mathbf{e}_0 \in \mathsf{SET}_{n_{\sf max}}$;
    \item[$\diamond$] $\mathbf{x}^\star_{\delta_{\alpha, i}} \in \mathsf{SET}_\alpha$ for all $\alpha \in \{D, S, L, U, \mathsf{all}\}$ and all $i \in [k_\alpha]$;
    \item[$\diamond$] $\mathbf{z}^\star \in \mathsf{B}^2_{8(n_{\sf max}- n_{\sf min}) + m}$.
    \end{itemize}

    It can be seen that the vector $\mathbf{w}$ obtained above belongs to this tailored set $\mathsf{VALID}$.
    Next, let us define the set of permutations $\mathcal{S}$ as follows, \vspace{-0.2cm}
    \[
    \mathcal{S} = \mathcal{S}_{n_{\sf max}} \times \big(\mathcal{S}_{\eta_D}\big)^{k_D} \times \big(\mathcal{S}_{\eta_S}\big)^{k_S} \times \big(\mathcal{S}_{\eta_L}\big)^{k_L} \times \big(\mathcal{S}_{\eta_U}\big)^{k_U} \times \big(\mathcal{S}_{\eta_{\sf all}}\big)^{k_{\sf all}} \times \mathcal{S}_{2(8(n_{\sf max}- n_{\sf min}) + m)}.\vspace{-0.2cm}
    \]
    Then, for each element \vspace{-0.2cm}
    \[
    \phi = \big(\hspace*{1.6pt}
    \pi, \hspace*{1.6pt}
    \tau_{D,1}, \ldots, \hspace*{1.6pt}\tau_{D,k_D}, \hspace*{1.6pt}
    \tau_{S,1}, \ldots, \hspace*{1.6pt}\tau_{S,k_S},  \hspace*{1.6pt}
    \tau_{L,1}, \ldots, \hspace*{1.6pt}\tau_{L,k_L},   \hspace*{1.6pt}
    \tau_{U,1}, \ldots,\hspace*{1.6pt} \tau_{U,k_U},   \hspace*{1.6pt}
    \tau_{\mathsf{all},1}, \ldots, \hspace*{1.6pt}\tau_{\mathsf{all}, k_{\sf all}},  \hspace*{1.6pt}
    \theta  \hspace*{1.6pt}
    \big) \in \mathcal{S},\vspace{-0.2cm}
    \]
    we define the permutation $\Gamma_\phi$ that, when applied to $\mathbf{w} \in \mathbb{Z}^\ell$ of the form \vspace{-0.2cm}
    \begin{eqnarray*}
    \nonumber\mathbf{w} = \big(\hspace*{2.6pt}
    \mathbf{e}_0 \hspace*{2.6pt}\|\hspace*{2.6pt}
    \mathbf{x}^\star_{\delta_{D,1}} \hspace*{2.6pt}\|\hspace*{2.6pt}
    \ldots \|\hspace*{2.6pt}
    \mathbf{x}^\star_{\delta_{D, k_D}} &&\|\hspace*{2.6pt}
    \mathbf{x}^\star_{\delta_{S,1}} \hspace*{2.6pt}\|\hspace*{2.6pt}
    \ldots \|
    \mathbf{x}^\star_{\delta_{S, k_S}} \hspace*{2.6pt}\|\hspace*{2.6pt}
    \mathbf{x}^\star_{\delta_{L,1}} \hspace*{2.6pt}\|\hspace*{2.6pt}
    \ldots \|\hspace*{2.6pt}
    \mathbf{x}^\star_{\delta_{L, k_L}} \\
    &&\|\hspace*{2.6pt}
    \mathbf{x}^\star_{\delta_{U,1}} \hspace*{2.6pt}\|\hspace*{2.6pt}
    \ldots \|\hspace*{2.6pt}
    \mathbf{x}^\star_{\delta_{U, k_U}} \hspace*{2.6pt}\|\hspace*{2.6pt}
    \mathbf{x}^\star_{\delta_{\mathsf{all},1}} \hspace*{2.6pt}\|\hspace*{2.6pt}
    \ldots \|\hspace*{2.6pt}
    \mathbf{x}^\star_{\delta_{\mathsf{all}, k_{\sf all}}} \hspace*{2.6pt}\|\hspace*{2.6pt}
    \mathbf{z}^\star \hspace*{2.6pt}
    \big).\vspace{-0.2cm}
    \end{eqnarray*}
    where $\mathbf{e}_0 \in \mathbb{Z}^{n_{\sf max}\lceil\log n_{\sf max}\rceil}$, $\mathbf{x}^\star_{\delta_{\alpha, i}} \in \mathbb{Z}^{8\eta_\alpha}$ for all $\alpha \in \{D, S, L, U, \mathsf{all}\}$ and $i \in k_\alpha$, and $\mathbf{z}^\star \in \mathbb{Z}^{2(8(n_{\sf max}- n_{\sf min}) + m)}$, it transforms the blocks of vector $\mathbf{w}$ as follows,

    \begin{itemize}\itemsep=1mm
      \item[$\diamond$] $\mathbf{e}_0 \mapsto T_{\pi, \lceil\log n_{\sf max}\rceil}(\mathbf{e}_0)$.
      \item[$\diamond$] For all $\alpha \in \{D, S, L, U, \mathsf{all}\}$ and all $i \in k_\alpha$: \hspace*{2.8pt}
      $\mathbf{x}^\star_{\delta_{\alpha,i}} \mapsto T_{\tau_{\alpha,i}, 8}(\mathbf{x}^\star_{\delta_{\alpha,i}})$.
      \item[$\diamond$] $\mathbf{z}^\star \mapsto \theta(\mathbf{z}^\star)$.
    \end{itemize}

    Based on the properties observed in~(\ref{eq:Bm2}), (\ref{eq:SET-DSLUall}), and (\ref{eq:SET-nmin}), it can be seen that we have satisfied the conditions specified in~(\ref{eq:zk-equivalence}), namely, \vspace{-0.2cm}
    \begin{eqnarray*}
    \begin{cases} \mathbf{w} \in \mathsf{VALID} \hspace*{2.5pt} \Leftrightarrow \hspace*{2.5pt} \Gamma_\phi(\mathbf{w}) \in \mathsf{VALID}, \\
    \text{If } \mathbf{w} \in \mathsf{VALID} \text{ and } \phi \text{ is uniform in } \mathcal{S}, \text{ then }  \Gamma_\phi(\mathbf{w}) \text{ is uniform in } \mathsf{VALID}.
    \end{cases}\vspace{-0.2cm}
    \end{eqnarray*}

    Having reduced the considered statement to an instance of the relation $\mathrm{R}_{\rm abstract}$, let us now describe how our protocol
    is executed.
    The protocol uses the KTX string commitment scheme $\mathsf{COM}$, which is statistically hiding and computationally binding under the \textsf{SIS} assumption.
    Prior to the interaction, the prover $\mathcal{P}$ and verifier $\mathcal{V}$ construct the matrix $\mathbf{M}$ and vector $\mathbf{v}$ based on the common inputs $(\mathbf{A}, \mathbf{B}, \mathbf{h}, \Delta)$, while $\mathcal{P}$ builds the vector $\mathbf{w} \in \mathsf{VALID}$ from vectors $\mathbf{x}$ and $\mathbf{r}$, as discussed above.
    Then, $\mathcal{P}$ and $\mathcal{V}$ interact per Figure~\ref{Figure:Interactive-Protocol}.
    We thus obtain the following result, as a corollary of Theorem~\ref{Theorem:zk-protocol}.
    \begin{theorem}
    Under the $\mathsf{SIS}$ assumption, the protocol above is a ZKPPC protocol with respect to the randomised password hashing scheme $\mathcal{L}$ from Section~\ref{Subsection:lattice-pwhash} and policy $f = \big((k_D, k_U, k_L, k_S), n_{\mathtt{min}}, n_{\mathtt{max}}\big)$.
      The protocol is a statistical ZKAoK with perfect completeness, soundness error $2/3$ and communication cost $\mathcal{O}(\ell \log q)$.
    \end{theorem}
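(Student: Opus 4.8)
The plan is to obtain the statement as a corollary of Theorem~\ref{Theorem:zk-protocol}, once we have checked that the reduction carried out above is faithful, and then upgrade the abstract knowledge extractor into one that produces a genuine ZKPPC witness. Concretely I would split the argument into three steps: (i) verify that ``$\mathbf{w}\in\mathsf{VALID}$ and $\mathbf{M}\cdot\mathbf{w}=\mathbf{v}\bmod q$'' is equivalent to the three witness conditions imposed on the prover, so that running the protocol of Figure~\ref{Figure:Interactive-Protocol} on $(\mathbf{M},\mathbf{v})$ is exactly a proof of the intended statement; (ii) read off perfect completeness, statistical zero-knowledge, soundness error $2/3$ and the $\mathcal{O}(\ell\log q)$ communication cost directly from Theorem~\ref{Theorem:zk-protocol}; and (iii) convert the special-soundness extractor of Theorem~\ref{Theorem:zk-protocol} into an extractor returning $(pw,s_P,s_H)$ with $f(pw)=\mathtt{true}$ and $\mathsf{Hash}(pp,\mathsf{PreHash}(pp,pw,s_P),s_P,s_H)=\mathbf{h}$.

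For (i) and completeness, an honest prover holding a policy-compliant $pw$, pre-hash salt $\chi$ and hash salt $\mathbf{r}$ can always assemble a vector $\mathbf{w}$ of the form~(\ref{eq:protocol-witness-w}) lying in $\mathsf{VALID}$: $\mathbf{e}_0\in\mathsf{SET}_{n_{\sf max}}$ since it encodes a permutation of $[n_{\sf max}]$; each $\mathbf{x}_{\delta_{\alpha,i}}\in\mathsf{Enc}_\alpha$ extends to some $\mathbf{x}^\star_{\delta_{\alpha,i}}\in\mathsf{SET}_\alpha$ by appending the remaining $\eta_\alpha-1$ encodings of $\Sigma_\alpha$; and $\mathbf{z}=(\mathbf{y}\|\mathbf{r})$ extends to $\mathbf{z}^\star\in\mathsf{B}^2_{8(n_{\sf max}-n_{\sf min})+m}$ by appending its bitwise complement. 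Appending the matching zero-columns turns~(\ref{eq:big-linear-equation}) into~(\ref{eq:final-Mw=v}). Conversely, $\mathbf{w}\in\mathsf{VALID}$ with $\mathbf{M}\cdot\mathbf{w}=\mathbf{v}$ forces $\mathbf{e}_0$ to encode a permutation, forces the blocks at the positions named by $\Delta$ into the correct character classes, forces $\mathbf{r},\mathbf{y}$ to be binary, and — via the zero-columns — is equivalent to $\mathbf{A}\cdot\mathbf{x}+\mathbf{B}\cdot\mathbf{r}=\mathbf{h}\bmod q$. Hence the honest interaction is accepted with probability $1$, and the soundness error $2/3$, the statistical zero-knowledge property (the simulator of Theorem~\ref{Theorem:zk-protocol} needs only $(\mathbf{M},\mathbf{v})$, which both parties derive from the public $(\mathbf{A},\mathbf{B},\mathbf{h},\Delta)$; revealing $\Delta$ leaks nothing since $\chi$ masks the original block positions) and the $\mathcal{O}(\ell\log q)$ cost all carry over verbatim from Theorem~\ref{Theorem:zk-protocol}.

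For (iii), given three accepting transcripts sharing a commitment, run the extractor of Theorem~\ref{Theorem:zk-protocol} to get $\mathbf{w}'\in\mathsf{VALID}$ with $\mathbf{M}\cdot\mathbf{w}'=\mathbf{v}\bmod q$. Parse $\mathbf{w}'$ according to~(\ref{eq:protocol-witness-w}): from $\mathbf{e}_0\in\mathsf{SET}_{n_{\sf max}}$ read a permutation $\chi$; from $\mathbf{z}^\star$ read $\mathbf{r}\in\{0,1\}^m$ and the blocks $\{\mathbf{x}_j\}_{j\in[n_{\sf max}]\setminus\Delta}$; from the first length-$8$ block of each $\mathbf{x}^\star_{\delta_{\alpha,i}}\in\mathsf{SET}_\alpha$ read $\mathbf{x}_{\delta_{\alpha,i}}\in\mathsf{Enc}_\alpha$. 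Reassemble $\mathbf{e}'=(\mathbf{x}_1\|\cdots\|\mathbf{x}_{n_{\sf max}})$ and $\mathbf{x}=(\mathbf{e}_0\|\mathbf{e}')$; since the extension only appended zero-columns, $\mathbf{M}\cdot\mathbf{w}'=\mathbf{v}$ gives $\mathbf{A}\cdot\mathbf{x}+\mathbf{B}\cdot\mathbf{r}=\mathbf{h}\bmod q$. Put $s_P=\chi$, $s_H=\mathbf{r}$, and let $pw$ be the ordered list of printable characters among the $n_{\sf max}$ blocks of $T_{\chi^{-1},8}(\mathbf{e}')$. Because $\chi^{-1}$ is a bijection on block positions, the $n_{\sf min}$ positions $\{\chi^{-1}(\delta)\}_{\delta\in\Delta}$ are distinct and carry encodings of the prescribed classes, so $pw$ contains at least $k_D$ digits, $k_S$ symbols, $k_L$ lower-case and $k_U$ upper-case letters and has length between $n_{\sf min}$ and $n_{\sf max}$, i.e. $f(pw)=\mathtt{true}$; and $\mathsf{PreHash}(pp,pw,\chi)=\mathbf{e}'$ with the non-printable blocks of $T_{\chi^{-1},8}(\mathbf{e}')$ serving as the filler blocks, so $\mathsf{Hash}(pp,\mathbf{e}',\chi,\mathbf{r})=\mathbf{A}\cdot\mathbf{x}+\mathbf{B}\cdot\mathbf{r}=\mathbf{h}$.

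The main obstacle is this last reconciliation between the extracted $\mathbf{e}'$ and the range of $\mathsf{PreHash}$: a cheating prover is free to put arbitrary — possibly printable — $8$-bit blocks at the non-$\Delta$ positions, so one must pin down exactly how much latitude $\mathsf{PreHash}$ permits in choosing and placing its filler characters and show that every $\mathbf{e}'$ consistent with $\mathsf{VALID}$ is then reproduced by $\mathsf{PreHash}(pp,pw,\chi)$ for the $pw$ above; if instead $\mathsf{PreHash}$ is read as appending one fixed filler character, the clean fix is to relax the proven relation slightly — standard for Stern-like arguments of knowledge — and note the relaxation still certifies a policy-compliant password hashing to $\mathbf{h}$. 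The remaining loose ends (two distinct extracted pre-images for the same $\mathbf{h}$ would contradict the computational binding of the KTX commitment, hence cannot occur under \textsf{SIS}) reuse the binding argument already given for $\mathcal{L}$ and are routine.
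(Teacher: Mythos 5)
Your proposal follows the same strategy as the paper's own proof: view the construction as an instance of $\mathrm{R}_{\rm abstract}$, inherit completeness, soundness error $2/3$, statistical zero-knowledge, and the $\mathcal{O}(\ell\log q)$ cost from Theorem~\ref{Theorem:zk-protocol}, and then ``backtrack'' the extracted $\mathbf{w}'\in\mathsf{VALID}$ with $\mathbf{M}\cdot\mathbf{w}'=\mathbf{v}\bmod q$ into $(\mathbf{x}',\mathbf{r}')$ satisfying $\mathbf{A}\cdot\mathbf{x}'+\mathbf{B}\cdot\mathbf{r}'=\mathbf{h}\bmod q$, a permutation $\chi'$ from $\mathbf{e}'_0\in\mathsf{SET}_{n_{\sf max}}$, and blocks at positions $\Delta$ lying in the right $\mathsf{Enc}_\alpha$'s. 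From there you recover a password by un-permuting and discarding non-printable blocks. This is exactly the paper's extraction argument, down to the same intermediate objects.

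Where you go beyond the paper is in flagging the last reconciliation step. The paper's proof stops at ``we obtain a vector that encodes some password string $pw'$ satisfying policy $f$,'' without verifying that the extracted pre-hash is actually in the image of $\mathsf{PreHash}(pp,\cdot,\chi')$. As you note, a dishonest prover is free to place arbitrary binary blocks (printable or not) at the non-$\Delta$ positions, so that after applying $T^{-1}_{\chi',8}$ the filler blocks need not sit contiguously at the end nor consist of a single fixed non-printable character; under a literal deterministic reading of $\mathsf{PreHash}$ the extracted $(\mathbf{e}'_0\|\mathbf{e}')$ might fail to equal $\big(\mathsf{bin}\text{-encoding of }\chi'\,\big\|\,\mathsf{PreHash}(pp,pw',\chi')\big)$, and then $\mathsf{Hash}$ on the extracted tuple would not reproduce $\mathbf{h}$. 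Your two suggested repairs --- reading $\mathsf{PreHash}$ as allowing arbitrary filler characters and positions (making the reconciliation immediate), or stating the extracted relation in a slightly relaxed form, as is standard for Stern-type arguments of knowledge --- are both reasonable, and either one closes a small gap that the paper's proof leaves implicit. The rest of your argument (honest witness construction, use of the zero-column extension, invoking the KTX binding to rule out inconsistent extractions) matches the paper's reasoning.
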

    \begin{proof}
    Perfect completeness, soundness error $2/3$ and communication cost  $\mathcal{O}(\ell \log q)$ of the protocol follow from the use of the abstract protocol in Figure~\ref{Figure:Interactive-Protocol}.
    For simulation, we simply run the simulator of Theorem~\ref{Theorem:zk-protocol}.

    As for knowledge extraction, we first run the knowledge extractor of Theorem~\ref{Theorem:zk-protocol} to get the vector $\mathbf{w}' \in \mathsf{VALID}$ such that $\mathbf{M}\cdot \mathbf{w}' = \mathbf{v} \bmod q.$
    Then, we ``backtrack'' the transformations to extract from $\mathbf{w}'$, vectors $\mathbf{x}' = (\mathbf{e}'_0 \| \mathbf{x}'_1 \| \ldots \| \mathbf{x}'_{n_{\sf max}}) \in \{0,1\}^{n_{\sf max}\lceil\log n_{\sf max}\rceil+ 8n_{\sf max}}$ and $\mathbf{r}' \in \{0,1\}^m$ such that
    \begin{itemize}\itemsep=1mm
    \item[$\diamond$] $\mathbf{A}\cdot \mathbf{x}' + \mathbf{B}\cdot \mathbf{r}' = \mathbf{h} \bmod q$;
    \item[$\diamond$] $\mathbf{e}'_0 \in \mathsf{SET}_{n_{\sf max}}$;
    \item[$\diamond$] For all $\alpha \in \{D, S, L, U, \mathsf{all}\}$ and all $i \in k_{\alpha}$: $\mathbf{x}'_{\delta_{\alpha,i}} \in \mathsf{Enc}_\alpha$.
    \end{itemize}
    Notice that one can recover a permutation of $n_{\sf max}$ elements from an element of $\mathsf{SET}_{n_{\sf max}}$.
    Let $\chi'$ be the permutation encoded by $\mathbf{e}'_0$.
    Then, by applying the inverse permutation $T^{-1}_{\chi', 8}$ to $(\mathbf{x}'_1 \| \ldots \| \mathbf{x}'_{n_{\sf max}})$, we recover $\mathbf{e}' \in \{0,1\}^{8n_{\sf max}}$.
    Finally, by removing potential blocks of length $8$ that correspond to encodings of non-printable ASCII characters from $\mathbf{e}'$, we obtain a vector that encodes some password string $pw'$ satisfying policy $f$.
     \qed
    \end{proof}
    \smallskip

    \noindent
    {\bf Efficiency analysis. } By inspection, we can see that, without using the big-O notation, each round of the proposed protocol has communication cost slightly larger than \vspace{-0.2cm}
    \[
    \ell\log q = \big(n_{\sf max}\lceil\log n_{\sf max}\rceil + 8(k_D\eta_D + k_S\eta_S + k_L\eta_L + k_U\eta_U) + 8k_{\sf all}\eta_{\sf all} + 2(8(n_{\sf max}- n_{\sf min}) + m)\big)\log q.\vspace{-0.2cm}
    \]

    Let us estimate the cost in practice.
    Note that the KTX commitment scheme can work with relatively small lattice parameters, e.g., $n=256$, $\log q = 10$, $m = 5120$.
    For a common password policy $f= \big((1,1,1,1), 8,16\big)$, the communication cost would be about $17$ KB.
    As each round has a soundness error of $2/3$, one may have to repeat the protocol many times in parallel to achieve a high level of confidence.
    For instance, if a soundness error of $2^{-30}$ is required, then one can repeat $52$ times for a final cost of around $900$ KB.
    In practical implementations, one can exploit various optimizations (e.g., instead of sending a random vector, one can send the PRNG seed used to generate it) to reduce the communication complexity.

\section{Conclusion and Open Questions}\label{section:conclusion}
Through the use of the KTX commitment scheme~\cite{AC:KawTanXag08} and a Stern-like zero-knowledge argument of set membership, we designed a lattice-based zero-knowledge protocol for proving that a committed/hashed password sent to the server satisfies the required password policy.
All together, we obtain the first ZKPPC that is based on the hardness of the \textsf{SIS} problem which to date remains quantum resistant.
Unfortunately, there are no viable VPAKE protocols from lattices that can be coupled with our ZKPPC protocol to construct a complete privacy-preserving password-based authentication and key exchange system.

Our proposed ZKPPC protocol can be employed to securely register chosen passwords at remote servers with the following security guarantees:
(1) Registered passwords are not disclosed to the server until used;
(2) Each registered password provably conforms to the specified password policy.
Although not being ready to be deployed in practice, we view this work as the first step in designing post-quantum privacy-preserving password-based authentication and key exchange systems.

We leave several open questions as potential future work:
(1) to construct a more practical lattice-based ZKPPC;
(2) to develop a lattice-based VPAKE; and
(3) to extend lattice-based ZKPPC to other PAKE protocols, such as two-server PAKE, where the passwords are secretly shared between two servers, of which we assume at most one to be compromisable.
The third question is similar to the one asked by Kiefer and Manulis~\cite{KM14} and as they noted, it is a challenge even in the classical discrete logarithm setting.

\medskip
\noindent
{\sc Acknowledgements. } We would like to thank the anonymous reviewers of ISC 2017 for helpful comments. The research is supported by Singapore Ministry of Education under Research Grant MOE2016-T2-2-014(S) and by NTU under Tier 1 grant RG143/14.

\appendix
\section{Proof of Theorem~\ref{Theorem:zk-protocol}}\label{appendix:zk-theorem}
\vspace{-0.2cm}
We provide the proof of Theorem~\ref{Theorem:zk-protocol} as it appears in~\cite{AC:LLMNW16b} but first restate the theorem.

\smallskip\noindent{\bf Theorem~\ref{Theorem:zk-protocol}. }
{\it The protocol in Figure~\ref{Figure:Interactive-Protocol} is a statistical \emph{\textsf{ZKAoK}} with perfect completeness, soundness error~$2/3$, and communication cost~${\mathcal{O}}(\ell\log q)$. Namely:
\begin{itemize}\itemsep=1mm
\item There exists a polynomial-time simulator that, on input $(\mathbf{M}, \mathbf{v})$, outputs an accepted transcript statistically close to that produced by the real prover.
\item There exists a polynomial-time knowledge extractor that, on input a commitment $\mathrm{CMT}$ and $3$ valid responses $(\mathrm{RSP}_1,\mathrm{RSP}_2,\mathrm{RSP}_3)$ to all $3$ possible values of the challenge $Ch$, outputs $\mathbf{w}' \in \mathsf{VALID}$ such that $\mathbf{M}\cdot \mathbf{w}' = \mathbf{v} \bmod q.$
\end{itemize}
}

\begin{proof}
Perfect completeness of the protocol can be checked: If a prover follows the protocol honestly, the verifier will always accept. 
It is also easy to see that the communication cost is bounded by ${\mathcal{O}}(\ell \log q)$.

We now prove that the protocol is a statistical zero-knowledge argument of knowledge.

\smallskip
\noindent
{\bf Zero-Knowledge Property. } We construct a \textsf{PPT} simulator $\mathsf{SIM}$ interacting with a (possibly dishonest) verifier $\widehat{\mathcal{V}}$, such that, given only the public inputs, with probability negligibly close to $2/3$, $\mathsf{SIM}$ outputs a simulated transcript that is statistically close to the one produced by the honest prover in the real interaction.

 $\mathsf{SIM}$ first chooses uniformly at random, $\overline{Ch} \in \{1,2,3\}$, its prediction of $Ch$ that $\widehat{\mathcal{V}}$ will \emph{not} choose.

\smallskip
\noindent
\textbf{Case }$\overline{Ch}=1$: Using basic linear algebra over $\mathbb{Z}_q$, $\mathsf{SIM}$ computes a vector $\mathbf{w}' \in \mathbb{Z}_q^\ell$ such that $\mathbf{M}\cdot \mathbf{w}' = \mathbf{v} \bmod q.$
Next, it samples $\mathbf{r}_w \xleftarrow{\$} \mathbb{Z}_q^\ell$, $\phi \xleftarrow{\$}  \mathcal{S}$, and randomness $\rho_1, \rho_2, \rho_3$ for $\mathsf{COM}$.
Finally, it sends the commitment $\mathrm{CMT}= \big(C'_1, C'_2, C'_3\big)$ to $\widehat{\mathcal{V}}$, where
    \begin{gather*}
        C'_1 =  \mathsf{COM}(\phi, \mathbf{M}\cdot \mathbf{r}_w; \rho_1), \quad
        C'_2 =  \mathsf{COM}(\Gamma_{\phi}(\mathbf{r}_w); \rho_2), \quad
        C'_3 =  \mathsf{COM}(\Gamma_{\phi}(\mathbf{w}' + \mathbf{r}_w); \rho_3).\vspace{-0.2cm}
    \end{gather*}
Receiving a challenge $Ch$ from $\widehat{\mathcal{V}}$, the simulator responds as follows:
\begin{itemize}\itemsep=1mm
\item If $Ch=1$: Output $\bot$ and abort.
\item If $Ch=2$: Send $\mathrm{RSP} = \big(\phi, \mathbf{w}' + \mathbf{r}_w, \rho_1, \rho_3 \big)$.
\item If $Ch=3$: Send $\mathrm{RSP} = \big(\phi, \mathbf{r}_w, \rho_1, \rho_2\big)$.
\end{itemize}

\smallskip

\noindent
\textbf{Case }$\overline{Ch}=2$: $\mathsf{SIM}$ samples $\mathbf{w}' \xleftarrow{\$} \mathsf{VALID}$, $\mathbf{r}_w \xleftarrow{\$} \mathbb{Z}_q^\ell$, $\phi \xleftarrow{\$}  \mathcal{S}$, and randomness $\rho_1, \rho_2, \rho_3$ for $\mathsf{COM}$.
Then, it sends the commitment $\mathrm{CMT}= \big(C'_1, C'_2, C'_3\big)$ to $\widehat{\mathcal{V}}$, where
    \begin{gather*}
        C'_1 =  \mathsf{COM}(\phi, \mathbf{M}\cdot \mathbf{r}_w; \rho_1), \quad
        C'_2 =  \mathsf{COM}(\Gamma_{\phi}(\mathbf{r}_w); \rho_2), \quad
        C'_3 =  \mathsf{COM}(\Gamma_{\phi}(\mathbf{w}' + \mathbf{r}_w); \rho_3).
    \end{gather*}
Receiving a challenge $Ch$ from $\widehat{\mathcal{V}}$, the simulator responds as follows:
\begin{itemize}\itemsep=1mm
\item If $Ch=1$: Send $\mathrm{RSP} = \big(\Gamma_\phi(\mathbf{w}'), \Gamma_\phi(\mathbf{r}_w), \rho_2, \rho_3\big)$.
\item If $Ch=2$: Output $\bot$ and abort.
\item If $Ch=3$: Send $\mathrm{RSP} = \big(\phi, \mathbf{r}_w, \rho_1, \rho_2\big)$.
\end{itemize}

\smallskip

\noindent
\textbf{Case }$\overline{Ch}=3$: $\mathsf{SIM}$ samples $\mathbf{w}' \xleftarrow{\$} \mathsf{VALID}$, $\mathbf{r}_w \xleftarrow{\$} \mathbb{Z}_q^\ell$, $\phi \xleftarrow{\$}  \mathcal{S}$, and randomness $\rho_1, \rho_2, \rho_3$ for $\mathsf{COM}$.
Then, it sends the commitment $\mathrm{CMT}= \big(C'_1, C'_2, C'_3\big)$ to $\widehat{\mathcal{V}}$, where
$C'_2 =  \mathsf{COM}(\Gamma_{\phi}(\mathbf{r}_w); \rho_2)$, $C'_3 =  \mathsf{COM}(\Gamma_{\phi}(\mathbf{w}' + \mathbf{r}_w); \rho_3)$ as in the previous two cases, while
    \begin{eqnarray*}
        C'_1 =  \mathsf{COM}(\phi, \mathbf{M}\cdot (\mathbf{w}'+ \mathbf{r}_w) - \mathbf{v}; \rho_1).
    \end{eqnarray*}
Receiving a challenge $Ch$ from $\widehat{\mathcal{V}}$, it responds as follows:
\begin{itemize}
  \item If $Ch=1$: Send $\mathrm{RSP}$ computed as in the case $(\overline{Ch}=2, Ch=1)$.
  \item If $Ch=2$: Send $\mathrm{RSP}$ computed as in the case $(\overline{Ch}=1, Ch=2)$.
 \item If $Ch=3$: Output $\bot$ and abort.
\end{itemize}
\smallskip

\noindent
Observe that, in each of the cases considered above, since $\mathsf{COM}$ is statistically hiding, the distribution of the commitment $\mathrm{CMT}$ and challenge~$Ch$ from~$\widehat{\mathcal{V}}$ are statistically close to those in the real interaction.
Hence, the probability that the simulator outputs~$\bot$ is negligibly close to~$1/3$.
Moreover, one can check that whenever the simulator does not halt, it will provide an accepted transcript, the distribution of which is statistically close to the prover's in the real interaction.
In other words, the constructed simulator can successfully impersonate the honest prover with probability negligibly close to~$2/3$.

\medskip

\noindent
{\bf Argument of Knowledge.} Suppose $\mathrm{RSP}_1 = (\mathbf{t}_w, \mathbf{t}_r, \rho_{2}, \rho_{3})$, $\mathrm{RSP}_2 = (\phi_2, \mathbf{w}_2, \rho_{1}, \rho_{3})$ and $\mathrm{RSP}_3 = (\phi_3, \mathbf{w}_3, \rho_{1}, \rho_{2})$ are $3$ valid responses to the same commitment $\mathrm{CMT} = (C_1, C_2, C_3)$, with respect to all $3$ possible values of the challenge.
The validity of these responses implies that:
\[
\begin{cases}
\mathbf{t}_w \in \mathsf{VALID}; \\
C_1 = \mathsf{COM}(\phi_2, \mathbf{M}\cdot \mathbf{w}_2 - \mathbf{v} \bmod q;\rho_1) = \mathsf{COM}(\phi_3, \mathbf{M}\cdot \mathbf{w}_3; \rho_1); \\
C_2 = \mathsf{COM}(\mathbf{t}_r; \rho_2) = \mathsf{COM}(\Gamma_{\phi_3}(\mathbf{w}_3); \rho_2); \\
{C}_3 = \mathsf{COM}(\mathbf{t}_w + \mathbf{t}_r \bmod q; \rho_3) = \mathsf{COM}(\Gamma_{\phi_2}(\mathbf{w}_2); \rho_3).
\end{cases}
\]
Since \textsf{COM} is computationally binding, it implies that
\begin{eqnarray}
\begin{cases}
\mathbf{t}_w \in \mathsf{VALID}; \hspace*{2.8pt} \phi_2 = \phi_3; \hspace*{2.8pt}\mathbf{t}_r = \Gamma_{\phi_3}(\mathbf{w}_3); \hspace*{2.8pt}\mathbf{t}_w + \mathbf{t}_r = \Gamma_{\phi_2}(\mathbf{w}_2) \bmod q; \\[2.5pt]
\mathbf{M}\cdot \mathbf{w}_2 - \mathbf{v} = \mathbf{M}\cdot \mathbf{w}_3 \bmod q.
\end{cases}
\end{eqnarray}
Since $\mathbf{t}_w \in \mathsf{VALID}$, if we let $\mathbf{w}' = [\Gamma_{\phi_2}]^{-1}(\mathbf{t}_w)$, then $\mathbf{w}' \in \mathsf{VALID}$. Furthermore, we have $$\Gamma_{\phi_2}(\mathbf{w}') + \Gamma_{\phi_2}(\mathbf{w}_3) = \Gamma_{\phi_2}(\mathbf{w}_2) \bmod q,$$
which means that $\mathbf{w}' + \mathbf{w}_3 = \mathbf{w}_2 \bmod q$, and
$\mathbf{M}\cdot \mathbf{w}' + \mathbf{M}\cdot \mathbf{w}_3 = \mathbf{M}\cdot \mathbf{w}_2 \bmod q$.
As a result, we have $\mathbf{M}\cdot \mathbf{w}' = \mathbf{v} \bmod q$, concluding the proof.
\qed
\end{proof}


\begin{thebibliography}{10}

\bibitem{STOC:Ajtai96}
M.~Ajtai.
\newblock Generating hard instances of lattice problems (extended abstract).
\newblock In {\em {STOC} 1996}. ACM, 1996.

\bibitem{C:BDLN16}
C.~Baum, I.~Damg{\aa}rd, K.~G. Larsen, and M.~Nielsen.
\newblock How to prove knowledge of small secrets.
\newblock In {\em {CRYPTO} 2016}, volume 9816 of {\em Lecture Notes in Computer
  Science}. Springer, 2016.

\bibitem{EPRINT:BDOP16}
C.~Baum, I.~Damgård, S.~Oechsner, and C.~Peikert.
\newblock Efficient commitments and zero-knowledge protocols from ring-sis with
  applications to lattice-based threshold cryptosystems.
\newblock Cryptology ePrint Archive, Report 2016/997, 2016.

\bibitem{BLL15}
E.~Bauman, Y.~Lu, and Z.~Lin.
\newblock Half a century of practice: Who is still storing plaintext passwords?
\newblock In {\em {ISPEC} 2015}, volume 9065 of {\em Lecture Notes in Computer
  Science}. Springer, 2015.

\bibitem{BPR00}
M.~Bellare, D.~Pointcheval, and P.~Rogaway.
\newblock Authenticated key exchange secure against dictionary attacks.
\newblock In {\em {EUROCRYPT} 2000}, volume 1807 of {\em Lecture Notes in
  Computer Science}. Springer, 2000.

\bibitem{BM92}
S.~M. Bellovin and M.~Merritt.
\newblock Encrypted key exchange: password-based protocols secure against
  dictionary attacks.
\newblock In {\em IEEE Symposium on Security and Privacy}. IEEE, 1992.

\bibitem{BM93}
S.~M. Bellovin and M.~Merritt.
\newblock Augmented encrypted key exchange: A password-based protocol secure
  against dictionary attacks and password file compromise.
\newblock In {\em ACM CCS 1993}. ACM, 1993.

\bibitem{BBC+13}
F.~Benhamouda, O.~Blazy, C.~Chevalier, D.~Pointcheval, and D.~Vergnaud.
\newblock New techniques for sphfs and efficient one-round pake protocols.
\newblock In {\em {CRYPTO} 2013}, volume 8402 of {\em Lecture Notes in Computer
  Science}. Springer, 2013.

\bibitem{AC:BCKLN14}
F.~Benhamouda, J.~Camenisch, S.~Krenn, V.~Lyubashevsky, and G.~Neven.
\newblock Better zero-knowledge proofs for lattice encryption and their
  application to group signatures.
\newblock In {\em {ASIACRYPT} 2014}, volume 8873 of {\em Lecture Notes in
  Computer Science}. Springer, 2014.

\bibitem{ESORICS:BKLP15}
F.~Benhamouda, S.~Krenn, V.~Lyubashevsky, and K.~Pietrzak.
\newblock Efficient zero-knowledge proofs for commitments from learning with
  errors over rings.
\newblock In {\em Computer Security - {ESORICS} 2015, Part {I}}, volume 9326 of
  {\em Lecture Notes in Computer Science}. Springer, 2015.

\bibitem{BP13}
F.~Benhamouda and D.~Pointcheval.
\newblock Verifier-based password-authenticated key exchange: New models and
  constructions.
\newblock Cryptology ePrint Archive, Report 2013/833, 2013.

\bibitem{CNW16}
S.~Cheng, K.~Nguyen, and H.~Wang.
\newblock Policy-based signature scheme from lattices.
\newblock {\em Des. Codes Cryptography}, 81(1):43--74, 2016.

\bibitem{CDXY17}
R.~Cramer, I.~Damg{\aa}rd, C.~Xing, and C.~Yuan.
\newblock Amortized complexity of zero-knowledge proofs revisited: Achieving
  linear soundness slack.
\newblock In {\em {EUROCRYPT} 2017}, volume 10210 of {\em Lecture Notes in
  Computer Science}. Springer, 2017.

\bibitem{dPL17}
R.~del Pino and V.~Lyubashevsky.
\newblock Amortization with fewer equations for proving knowledge of small
  secrets.
\newblock Cryptology ePrint Archive, Report 2017/280, 2017.

\bibitem{DF11}
Y.~Ding and L.~Fan.
\newblock Efficient password-based authenticated key exchange from lattices.
\newblock In {\em CIS 2011}. IEEE, 2011.

\bibitem{DCW13}
C.~Dong, L.~Chen, and Z.~Wen.
\newblock When private set intersection meets big data: an efficient and
  scalable protocol.
\newblock In {\em ACM CCS 2013}. ACM, 2013.

\bibitem{DK15}
C.~Dong and F.~Kiefer.
\newblock Secure set-based policy checking and its application to password
  registration.
\newblock In {\em CANS 2015}, volume 9476 of {\em Lecture Notes in Computer
  Science}. Springer, 2015.

\bibitem{FH10}
D.~Flor\^{e}ncio and C.~Herley.
\newblock Where do security policies come from?
\newblock In {\em SOUPS 2010}. ACM, 2010.

\bibitem{Furu05}
J.~Furukawa.
\newblock Efficient and verifiable shuffling and shuffle-decryption.
\newblock {\em {IEICE} Transactions on Fundamentals of Electronics,
  Communications and Computer Sciences}, 88-A(1):172--188, 2005.

\bibitem{LI12}
S.~Gates.
\newblock Linkedin password hack: Check to see if yours was one of the 6.5
  million leaked, 2012.
\newblock
  \url{http://www.huffingtonpost.com/2012/06/07/linkedin-password-hack-check_n_1577184.html}.

\bibitem{GL03}
R.~Gennaro and Y.~Lindell.
\newblock A framework for password-based authenticated key exchange.
\newblock In {\em {EUROCRYPT} 2003}, volume 2656 of {\em Lecture Notes in
  Computer Science}. Springer, 2003.

\bibitem{GMR06}
C.~Gentry, P.~MacKenzie, and Z.~Ramzan.
\newblock A method for making password-based key exchange resilient to server
  compromise.
\newblock In {\em {CRYPTO} 2006}, volume 4117 of {\em Lecture Notes in Computer
  Science}. Springer, 2006.

\bibitem{STOC:GenPeiVai08}
C.~Gentry, C.~Peikert, and V.~Vaikuntanathan.
\newblock Trapdoors for hard lattices and new cryptographic constructions.
\newblock In {\em {STOC} 2008}. ACM, 2008.

\bibitem{STOC:GolGol98}
O.~Goldreich and S.~Goldwasser.
\newblock On the limits of non-approximability of lattice problems.
\newblock In {\em {STOC} 1998}. ACM, 1998.

\bibitem{ACNS:Groth04}
J.~Groth.
\newblock Evaluating security of voting schemes in the universal composability
  framework.
\newblock In {\em {ACNS} 2004}, volume 3089 of {\em Lecture Notes in Computer
  Science}. Springer, 2004.

\bibitem{pwned}
T.~Hunt.
\newblock Have i been pwned, 2017.
\newblock \url{https://haveibeenpwned.com/}, accessed on 7 July 2017.

\bibitem{AC:JKPT12}
A.~Jain, S.~Krenn, K.~Pietrzak, and A.~Tentes.
\newblock Commitments and efficient zero-knowledge proofs from learning parity
  with noise.
\newblock In {\em {ASIACRYPT} 2012}, volume 7658 of {\em Lecture Notes in
  Computer Science}. Springer, 2012.

\bibitem{KV09}
J.~Katz and V.~Vaikuntanathan.
\newblock Smooth projective hashing and password-based authenticated key
  exchange from lattices.
\newblock In {\em {ASIACRYPT} 2009}, volume 5912 of {\em Lecture Notes in
  Computer Science}. Springer, 2009.

\bibitem{AC:KawTanXag08}
A.~Kawachi, K.~Tanaka, and K.~Xagawa.
\newblock Concurrently secure identification schemes based on the worst-case
  hardness of lattice problems.
\newblock In {\em {ASIACRYPT} 2008}, volume 5350 of {\em Lecture Notes in
  Computer Science}. Springer, 2008.

\bibitem{Kie16}
F.~Kiefer.
\newblock {\em Advancements in password-based cryptography.}
\newblock PhD thesis, University of Surrey, 2016.

\bibitem{KM14}
F.~Kiefer and M.~Manulis.
\newblock Zero-knowledge password policy checks and verifier-based {PAKE}.
\newblock In {\em {ESORICS} 2014}, volume 8713 of {\em Lecture Notes in
  Computer Science}. Springer, 2014.

\bibitem{KM16a}
F.~Kiefer and M.~Manulis.
\newblock Blind password registration for two-server password authenticated key
  exchange and secret sharing protocols.
\newblock In {\em {ISC} 2016}, volume 9866 of {\em Lecture Notes in Computer
  Science}. Springer, 2016.

\bibitem{KM16b}
F.~Kiefer and M.~Manulis.
\newblock Blind password registration for verifier-based {PAKE}.
\newblock In {\em {AsiaPKC@AsiaCCS} 2016}. ACM, 2016.

\bibitem{AC:LLMNW16b}
B.~Libert, S.~Ling, F.~Mouhartem, K.~Nguyen, and H.~Wang.
\newblock Signature schemes with efficient protocols and dynamic group
  signatures from lattice assumptions.
\newblock In {\em {ASIACRYPT} 2016}, volume 10032 of {\em Lecture Notes in
  Computer Science}. Springer, 2016.

\bibitem{AC:LLMNW16a}
B.~Libert, S.~Ling, F.~Mouhartem, K.~Nguyen, and H.~Wang.
\newblock Zero-knowledge arguments for matrix-vector relations and
  lattice-based group encryption.
\newblock In {\em {ASIACRYPT} 2016}, volume 10032 of {\em Lecture Notes in
  Computer Science}. Springer, 2016.

\bibitem{EC:LLNW16}
B.~Libert, S.~Ling, K.~Nguyen, and H.~Wang.
\newblock Zero-knowledge arguments for lattice-based accumulators:
  Logarithmic-size ring signatures and group signatures without trapdoors.
\newblock In {\em {EUROCRYPT} 2016}, volume 9666 of {\em Lecture Notes in
  Computer Science}. Springer, 2016.

\bibitem{PKC:LNSW13}
S.~Ling, K.~Nguyen, D.~Stehl{\'{e}}, and H.~Wang.
\newblock Improved zero-knowledge proofs of knowledge for the {ISIS} problem,
  and applications.
\newblock In {\em {PKC} 2013}, volume 7778 of {\em Lecture Notes in Computer
  Science}. Springer, 2013.

\bibitem{PKC:LinNguWan15}
S.~Ling, K.~Nguyen, and H.~Wang.
\newblock Group signatures from lattices: Simpler, tighter, shorter,
  ring-based.
\newblock In {\em {PKC} 2015}, volume 9020 of {\em Lecture Notes in Computer
  Science}. Springer, 2015.

\bibitem{LNWX17}
S.~Ling, K.~Nguyen, H.~Wang, and Y.~Xu.
\newblock Lattice-based group signatures: Achieving full dynamicity with ease.
\newblock In {\em {ACNS} 2017}, volume 10355 of {\em Lecture Notes in Computer
  Science}, 2017.

\bibitem{PKC:Lyubashevsky08}
V.~Lyubashevsky.
\newblock Lattice-based identification schemes secure under active attacks.
\newblock In {\em {PKC} 2008}, volume 4939 of {\em Lecture Notes in Computer
  Science}. Springer, 2008.

\bibitem{EC:Lyubashevsky12}
V.~Lyubashevsky.
\newblock Lattice signatures without trapdoors.
\newblock In {\em {EUROCRYPT} 2012}, volume 7237 of {\em Lecture Notes in
  Computer Science}. Springer, 2012.

\bibitem{C:MicPei13}
D.~Micciancio and C.~Peikert.
\newblock Hardness of {SIS} and {LWE} with small parameters.
\newblock In {\em {CRYPTO} 2013}, volume 8402 of {\em Lecture Notes in Computer
  Science}. Springer, 2013.

\bibitem{C:MicVad03}
D.~Micciancio and S.~P. Vadhan.
\newblock Statistical zero-knowledge proofs with efficient provers: Lattice
  problems and more.
\newblock In {\em {CRYPTO} 2003}, volume 2729 of {\em Lecture Notes in Computer
  Science}. Springer, 2003.

\bibitem{NIST16}
NIST.
\newblock Post-quantum crypto standardization - call for proposals
  announcement, 2016.
\newblock
  \url{http://csrc.nist.gov/groups/ST/post-quantum-crypto/cfp-announce-dec2016.html}.

\bibitem{Ped91}
T.~P. Pedersen.
\newblock Non-interactive and information-theoretic secure verifiable secret
  sharing.
\newblock In {\em {CRYPTO} 1991}, volume 576 of {\em Lecture Notes in Computer
  Science}. Springer, 1991.

\bibitem{C:PeiVai08}
C.~Peikert and V.~Vaikuntanathan.
\newblock Noninteractive statistical zero-knowledge proofs for lattice
  problems.
\newblock In {\em {CRYPTO} 2008}, volume 5157 of {\em Lecture Notes in Computer
  Science}. Springer, 2008.

\bibitem{Ya16}
N.~Perlroth.
\newblock More than half a billion yahoo accounts have been hacked, yahoo
  confirms, 2016.
\newblock
  \url{https://www.nytimes.com/2016/09/23/technology/yahoo-hackers.html}.

\bibitem{STOC:Regev05}
O.~Regev.
\newblock On lattices, learning with errors, random linear codes, and
  cryptography.
\newblock In {\em {STOC} 2005}. ACM, 2005.

\bibitem{Sho99}
P.~W. Shor.
\newblock Polynomial-time algorithms for prime factorization and discrete
  logarithms on a quantum computer.
\newblock {\em SIAM Review}, 41(2):303--332, 1999.

\bibitem{Stern96}
J.~Stern.
\newblock A new paradigm for public key identification.
\newblock {\em {IEEE} Trans. Information Theory}, 42(6):1757--1768, 1996.

\bibitem{UKK+12}
B.~Ur, P.~G. Kelley, S.~Komanduri, J.~Lee, M.~Maass, M.~L. Mazurek, T.~Passaro,
  R.~Shay, T.~Vidas, L.~Bauer, N.~Christin, and L.~F. Cranor.
\newblock How does your password measure up? the effect of strength meters on
  password creation.
\newblock In {\em USENIX Security Symposium 2012}. USENIX Association, 2012.

\end{thebibliography}
\end{document}